\newcommand{\Z}{\mathbb{Z}}
\newcommand{\F}{\mathbb{F}}
\newcommand{\C}{\mathcal{C}}
\newenvironment{keywords}{
       \list{}{\advance\topsep by0.35cm\relax\small
       \leftmargin=1cm
       \labelwidth=0.35cm
       \listparindent=0.35cm
       \itemindent\listparindent
       \rightmargin\leftmargin}\item[\hskip\labelsep
                                     \bfseries Keywords:]}
     {\endlist}
\begin{document}
\date{today}
\title{On Determining Deep Holes of Generalized Reed-Solomon Codes
\thanks{The research is partially supported by NSF under grants CCF-0830522 and 
CCF-0830524 for Q.C. and J.Z., and 
by National Science Foundation of China
(11001170) and Ky and Yu-Fen Fan Fund Travel Grant from the AMS for J.L.}}
\author{Qi Cheng\inst{1} and Jiyou Li\inst{2} and Jincheng Zhuang \inst{1}}
\institute{
School of Computer Science\\
The University of Oklahoma\\
Norman, OK 73019, USA.\\
Email: {\tt qcheng@cs.ou.edu, jzhuang@ou.edu} \and
Department of Mathematics\\
Shanghai Jiao Tong University\\
Shanghai, PR China\\
Email: {\tt lijiyou@sjtu.edu.cn}
}

\maketitle \pagestyle{plain}
\begin{abstract}
For a linear code, deep holes are defined to be vectors
that are  further away from codewords than all other vectors.
The problem of deciding whether a received word is a deep hole
for generalized Reed-Solomon codes is proved to be co-NP-complete \cite{GV05}\cite{CM07}.
For the extended Reed-Solomon codes $RS_q(\F_q,k)$, a conjecture was made to classify deep holes in \cite{CM07}.
Since then a lot of effort has been made to prove the conjecture, or its various forms.
In this paper, we classify deep holes completely for generalized Reed-Solomon codes $RS_p (D,k)$, where
$p$ is a prime, $|D| > k \geqslant \frac{p-1}{2}$. Our techniques  are built on the idea of
deep hole trees, and several results concerning  the Erd{\"o}s-Heilbronn conjecture.
\end{abstract}
\begin{keywords}
Reed-Solomon code, deep hole, deep hole tree, Erd{\"o}s-Heilbronn conjecture.
\end{keywords}
\section{Introduction}
Reed-Solomon codes are of special interest and importance both in theory and practice of error-correcting.
\begin{definition}
Let $\F_q$ be a finite field with q elements and characteristic $p$.
Let $D=\{\alpha_1,\ldots,\alpha_n\}\subseteq \F_q$ be the evaluation set and $v_i\in\F_q^*, 1\leqslant i \leqslant n,$ be the column multipliers. The
set of codewords of the generalized Reed-Solomon code $RS_q(D,k)$ of
length $n$ and dimension $k$ over $\F_q$ is defined as
$$RS_q(D,k)=\{(v_1f(\alpha_1),\ldots,v_nf(\alpha_n))\in \F_q^n \mid f(x)\in \F_q[x],
deg(f)\leqslant k-1\}.$$
\end{definition}
We will write generalized Reed-Solomon codes as GRS codes for short in the sequel.
If $D=\F_q^*$, it is called \textit{primitive}. If $D=\F_q$,
it is called a \textit{singly-extended} GRS code. A GRS code is called \textit{normalized} if its column multipliers are all equal to 1.
In this paper, we will work on the normalized GRS without loss of generality. 

The encoding algorithm of the GRS code can be
described by the linear map $\varphi: \F_q^k\rightarrow \F_q^n$, in which a message $(a_1,\ldots,a_k)$ is mapped to a codeword $(f(\alpha_1),\ldots,f(\alpha_n))$,
where $f(x)=a_kx^{k-1}+a_{k-1}x^{k-2}+\cdots+a_1\in \F_q[x].$

The \textit{Hamming distance} between two words is the number of their distinct coordinates. The \textit{error distance} of a received word $u\in\F_q^n$ to the code is defined as
its minimum Hamming distance to codewords. The \textit{minimum distance} of a code, which is denoted by $d$, is the smallest distance between any two distinct codewords of the code. The \textit{covering radius} of a code is the maximum distance from any vector in $\F_q^n$ to the nearest codeword. A \textit{deep hole} is a vector achieving the covering radius.
A linear code $[n,k]_q$ is called \textit{maximum distance separable} (in shot, MDS) if it attains the Singleton bound, i.e., $d=n-k+1$. GRS code is
a linear MDS code, and its minimum distance is known to be $n-k+1$ and the covering radius is known to
be $n-k$. Thus for the GRS code, $u$ is a deep hole if $d(u, RS_q(D,k))=n-k.$
A linear code can be represented by a generator matrix. In this paper,
we assume that the rows of a generator matrix form a basis for the code.

\subsection{Related work}
Efforts have been made to obtain an efficient decoding algorithm for
GRS codes. Given a received word $u\in \F_q^n$, if the
error distance is smaller than $n-\sqrt{nk}$, then the list decoding
algorithm of Sudan \cite{Sudan97} and Guruswami-Sudan \cite{GS99}
solves the decoding in polynomial time. However, in general, the
maximum likelihood decoding of GRS codes is NP-hard
\cite{GV05}.

We would like to determine all the deep holes of the code. To this end, given a
received word $u=(u_1,u_2,\ldots,u_n)\in \F_q^n$, we consider the
following Lagrange interpolating polynomial
$$u(x)=\sum_{i=1}^nu_i\frac{\prod_{j\neq i}(x-\alpha_j)}{\prod_{j\neq i}(\alpha_i-\alpha_j)}\in \F_q[x],$$
where $D=\{\alpha_1,\ldots,\alpha_n\}$ is the evaluation set. The Lagrange
interpolating polynomial is the only polynomial in $\F_q[x]$ of
degree less than $n$ that satisfies $u(\alpha_i)=u_i,1\leqslant i
\leqslant n.$ In this paper, we say that a function \textit{$u(x)$ generates} a vector $u\in\F_q^n$ if
$u=(u(\alpha_1),u(\alpha_2),\ldots,u(\alpha_n))$. We have the
following conclusions:
\begin{itemize}
\item If $deg(u)\leqslant k-1$, then $u\in RS_q(D,k)$ by definition and
$d(u,RS_q(D,k))=0$.
\item If $deg(u)=k$, then it can be shown that $u$ is a deep hole by the following proposition \cite{lw08},
i.e., $d(u, RS_q(D,k))=n-k$.
\end{itemize}
\begin{proposition} (\cite{lw08})
For $k \leqslant deg(u) \leqslant n-1$, we have the inequality
$$n-deg(u)\leqslant d(u, RS_q(D,k)) \leqslant n-k.$$
\end{proposition}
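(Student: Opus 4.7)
The plan is to split the claim into its two halves and dispatch each by an elementary polynomial degree count, using the Lagrange interpolation representation of $u$ already introduced in the excerpt. Both bounds should follow with no real obstacle, since this is essentially the standard MDS distance argument phrased in terms of the interpolating polynomial's degree.

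For the upper bound $d(u, RS_q(D,k)) \leqslant n-k$, I would simply exhibit a nearby codeword. Pick any $k$ evaluation points, say $\alpha_1, \ldots, \alpha_k$, and let $f(x) \in \F_q[x]$ be the unique polynomial of degree at most $k-1$ that interpolates $u$ at those points, i.e.\ $f(\alpha_i) = u(\alpha_i)$ for $1 \leqslant i \leqslant k$. Then $c = (f(\alpha_1), \ldots, f(\alpha_n)) \in RS_q(D,k)$, and by construction $u$ and $c$ already agree on $k$ coordinates, so $d(u,c) \leqslant n - k$. This is precisely the covering-radius bound $n-k$ for an MDS code.

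For the lower bound $n - \deg(u) \leqslant d(u, RS_q(D,k))$, let $c \in RS_q(D,k)$ be an arbitrary codeword generated by some polynomial $f(x) \in \F_q[x]$ with $\deg(f) \leqslant k-1$. The number of positions on which $u$ and $c$ agree is exactly $\#\{\alpha \in D : u(\alpha) = f(\alpha)\}$, i.e.\ the number of zeros of $u(x) - f(x)$ lying in $D$. Since $\deg(u) \geqslant k > \deg(f)$, the difference $u(x) - f(x)$ is a nonzero polynomial of degree exactly $\deg(u)$, and hence has at most $\deg(u)$ roots in $\F_q$, let alone in $D$. Therefore $u$ and $c$ disagree on at least $n - \deg(u)$ coordinates, giving $d(u,c) \geqslant n - \deg(u)$. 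Taking the minimum over all codewords $c$ yields the claimed lower bound.

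There is essentially no hard step: the only ingredients are the Lagrange interpolation construction (for the upper bound) and the basic degree-vs-roots inequality for polynomials over a field (for the lower bound). The hypothesis $\deg(u) \geqslant k$ is used exactly once, to guarantee that $u(x) - f(x)$ does not vanish identically when $\deg(f) \leqslant k-1$. No deeper structural property of GRS codes, and nothing about the Erd\H{o}s--Heilbronn conjecture invoked later in the paper, is needed here; the proposition is a warm-up that frames the rest of the paper's refined analysis of when the lower bound $n-\deg(u)$ is or is not tight.
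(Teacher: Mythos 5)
Your proof is correct, and since the paper simply cites this proposition from the Li--Wan reference without reproducing a proof, your argument fills in exactly the standard reasoning behind it: the upper bound by interpolating $u$ at any $k$ points of $D$ to produce a codeword agreeing with $u$ in $k$ coordinates, and the lower bound by noting that for $\deg(f)\leqslant k-1<\deg(u)$ the difference $u(x)-f(x)$ is a nonzero polynomial of degree $\deg(u)$ and so has at most $\deg(u)$ roots in $D$. Both steps, including the single use of the hypothesis $\deg(u)\geqslant k$, are exactly as they should be.
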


When the degree of $u(x)$ becomes larger than $k$, the situation
becomes complicated for GRS codes. However, in
the case of (singly-)extended GRS codes, the situation seems to
be much simpler. Cheng and Murray \cite{CM07} conjectured in 2007
that the vectors generated by polynomial of degree $k$ are the only
possible deep holes.
\begin{conjecture}\label{conj1} (\cite{CM07})
A word $u$ is a deep hole of $RS_q(F_q, k)$ if and only if $\deg(u)=k.$
\end{conjecture}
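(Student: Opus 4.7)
The ``if'' direction is supplied by the Proposition cited from \cite{lw08}. For the ``only if'' direction, I would prove the contrapositive: assuming the Lagrange interpolant $u(x)$ has degree $d$ with $k+1\le d\le q-1$, I need to exhibit a codeword strictly closer than $n-k$ to $u$, equivalently, a polynomial $g(x)$ with $\deg g\le k-1$ such that $u(x)-g(x)$ vanishes on at least $k+1$ points of $\F_q$.

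My first step is to pass from the codeword search to a divided-difference equation. For any $(k+1)$-subset $S=\{\beta_0,\ldots,\beta_k\}\subseteq\F_q$, the Lagrange interpolant $L_S(x)$ of $u|_S$ has degree $\le k$, and its leading coefficient equals the $k$-th divided difference $u[\beta_0,\ldots,\beta_k]=\sum_i u(\beta_i)/\prod_{j\ne i}(\beta_i-\beta_j)$. If that leading coefficient vanishes for some $S$, then $\deg L_S\le k-1$ and $g:=L_S$ is a valid witness, since $u-g$ vanishes on $S$. Using the classical identity $x^n[\vec\beta]=h_{n-k}(\vec\beta)$ (where $h_j$ is the complete homogeneous symmetric polynomial of degree $j$) together with linearity of divided differences in $u$, the condition to engineer becomes the symmetric equation
\[
\sum_{i=k}^{d} u_i\,h_{i-k}(\beta_0,\ldots,\beta_k) \;=\; 0,
\]
to be solved over $(k+1)$-subsets of $\F_q$.

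In the base case $d=k+1$ this collapses to the linear equation $\sum_{\beta\in S}\beta=-u_k/u_{k+1}$, a subset-sum problem in $\F_p$. I would close it with the theorem of Dias da Silva and Hamidoune (the sharp resolution of the Erd\H{o}s--Heilbronn conjecture): the set of $h$-element subset-sums of any $A\subseteq\F_p$ has cardinality at least $\min(p,h(|A|-h)+1)$, which for $A=\F_p$ and $h=k+1$ equals $p$ throughout $1\le k\le p-2$, so every target is attained.

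For general $d>k+1$ the symmetric equation is genuinely nonlinear in $\vec\beta$, and here I would deploy the ``deep hole tree'' previewed in the introduction: given a putative deep hole of interpolant-degree $d$, construct a derived word in a slightly modified code (for example by restricting to $\F_p\setminus\{\alpha\}$ for a well-chosen $\alpha$) whose interpolant has strictly smaller degree, and iterate until the base case is reached. The main technical obstacle will be showing that this descent never stalls under the hypothesis $k\ge(p-1)/2$; I expect it to require either a multivariate refinement of Dias da Silva and Hamidoune, in the spirit of Alon-Nathanson-Ruzsa applications of the Combinatorial Nullstellensatz to the polynomial $\sum_{i\ge k} u_i h_{i-k}$, or a careful combinatorial design at each tree step that guarantees the symmetric equation has a zero on some $(k+1)$-subset of $\F_p$.
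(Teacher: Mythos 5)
Your reduction is sound as far as it goes: $u$ fails to be a deep hole iff some polynomial of degree at most $k-1$ agrees with $u$ on $k+1$ points, which via divided differences becomes the vanishing of $\sum_{i\geqslant k}u_i\,h_{i-k}(\beta_0,\ldots,\beta_k)$ on some $(k+1)$-subset, and the case $\deg u=k+1$ does collapse to a subset-sum statement settled by Theorem \ref{SH}. But the argument stops exactly where the problem starts. For $\deg u\geqslant k+2$ you offer only two speculative options (a ``multivariate refinement'' of the restricted-sumset theorem, or an unspecified combinatorial design at each step of a descent), neither of which is carried out; that is precisely the open content of the statement, which the paper records as a \emph{conjecture}, not a theorem. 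Three further problems: (i) your descent quietly imports the hypothesis $k\geqslant(p-1)/2$, which belongs to Theorem \ref{mainthm} and is not part of Conjecture \ref{conj1}; (ii) your base case is justified only over prime fields, since Theorem \ref{SH} bounds restricted sums by $\min\{p,\cdot\}$ with $p$ the characteristic, so for $q=p^m$, $m>1$, it does not give coverage of all of $\F_q$, whereas the conjecture concerns general $q$; (iii) the proposed descent direction is problematic: restricting the evaluation set does not reduce the interpolant degree in any controlled way, and the restricted code $RS_q(D,k)$ with $D\subsetneq\F_q$ has \emph{more} deep holes (the functions $\frac{1}{x-\delta}$, $\delta\notin D$), so ``restrict and recurse'' loses exactly the information you need. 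The paper's deep hole tree runs the induction the other way, growing $D$ one point at a time and using Lemmas \ref{case_1}--\ref{case_3} together with Corollary \ref{2product} to prove no unexpected branches survive, which is where the conditions $q=p$ and $k\geqslant(p-1)/2$ enter.

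For comparison, the extent to which the paper actually establishes Conjecture \ref{conj1} is Theorem \ref{geometrythm}: when $k+1\leqslant p$ or $3\leqslant q-p+1\leqslant k+1\leqslant q-2$, a deep hole is converted via Proposition \ref{prop} into a set of $q+1$ vectors every $k+1$ of which are independent, and Ball's Theorem \ref{Simeon} then forces the extra row to be equivalent to $x^k$; for $q=p$ prime with $k\geqslant(p-1)/2$ the conjecture also follows from Theorem \ref{mainthm}. Even if you supplied the missing descent step under your hypotheses, you would at best recover that second, restricted range over prime fields, not the conjecture as stated.
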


There is an analogous conjecture for deep holes of primitive Reed-Solomon codes by Wu and Hong \cite{WH12}.
\begin{conjecture}\label{conj2} (\cite{WH12})
A word $u$ is a deep hole of $RS_q(F_q^*, k)$ if and only if:
$$u(x)=ax^k+f_{\leqslant k-1}(x), a\neq 0;$$
or
$$u(x)=bx^{q-2}+f_{\leqslant k-1}(x), b\neq 0;$$
where $f_{\leqslant k-1}(x)$ denotes a polynomial with degree not larger
than $k-1$.
\end{conjecture}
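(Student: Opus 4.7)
The plan is to treat the two directions separately, since the $(\Leftarrow)$ direction is short and the $(\Rightarrow)$ direction carries essentially all the content.

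For the sufficiency direction, the two families of claimed deep holes admit direct verifications. When $u(x)=ax^k+f_{\leqslant k-1}(x)$ with $a\neq 0$, Proposition 1 gives $d(u,RS_q(\F_q^*,k)) \geqslant n-\deg(u)=n-k$, and the MDS property gives the matching upper bound $n-k$. When $u(x)=bx^{q-2}+f_{\leqslant k-1}(x)$ with $b\neq 0$, I would use the identity $\alpha^{q-2}=\alpha^{-1}$ on $\F_q^*$: for any candidate $g(x)$ of degree $\leqslant k-1$, the equation $u(\alpha)=g(\alpha)$ multiplies up to $b+\alpha(f(\alpha)-g(\alpha))=0$, which is a polynomial of degree $\leqslant k$ in $\alpha$ and therefore vanishes on at most $k$ points of $\F_q^*$. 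Hence $u$ agrees with every codeword in at most $k$ positions, so $d(u,RS_q(\F_q^*,k))=n-k$.

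For the necessity direction, let $u$ be a deep hole with Lagrange interpolant $u(x)$ of degree $d$. The cases $d\leqslant k-1$ and $d=k$ are handled by definitions and Proposition 1, so the task is to show that whenever $k+1\leqslant d\leqslant q-2$ and $u$ is a deep hole, necessarily $d=q-2$ and $u(x)-bx^{q-2}$ has degree $\leqslant k-1$ for some $b\neq 0$. My plan is to produce, under the assumption that $u$ does not fit either template, an explicit polynomial $g$ of degree $\leqslant k-1$ with $|\{\alpha\in\F_q^*:u(\alpha)=g(\alpha)\}|\geqslant k+1$, contradicting the deep hole hypothesis. The construction I would use is the deep hole tree from the paper: starting from $u$, repeatedly subtract off suitably chosen codewords so that the "residual" has a strictly smaller controlled degree, while the set of common evaluation points is tracked from one level to the next. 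At each level, the number of feasible common points grows according to a restricted sumset, and the Erd\"os--Heilbronn bound (the sharp lower bound on the size of $A\mathbin{\hat{+}}A$ in $\F_p$) provides the combinatorial engine that certifies the tree eventually delivers more than $k$ agreements.

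The main obstacle, and where I expect most of the work to concentrate, is pinning down the boundary degree $d=q-2$. The reduction is designed to strictly lower the degree at every branching, but it must fail precisely when the top coefficient of $u$ sits at position $x^{q-2}$: this monomial is the only one whose "inverse behavior" on $\F_q^*$ matches the degree-$k$ rigidity that underlies the MDS bound, which is exactly what allows the second exceptional family. Formulating and proving this dichotomy quantitatively, and verifying that the Erd\"os--Heilbronn estimate $|A\mathbin{\hat{+}}A|\geqslant\min(p,2|A|-3)$ remains strong enough throughout the full range $k+1\leqslant d\leqslant q-3$ to close the recursion, is the critical step; this is also the point where the hypothesis $k\geqslant (p-1)/2$ from the paper's main theorem enters, ensuring the sumset grows fast enough to saturate $\F_p$ at each level of the tree.
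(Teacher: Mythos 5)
There is a genuine gap: what you have written is a program, not a proof, and the crucial direction is exactly the part you defer. Your sufficiency argument is fine (the identity $\alpha^{q-2}=\alpha^{-1}$ on $\F_q^*$ reduces agreement with a codeword to a nonzero polynomial equation of degree at most $k$, and Proposition~1 handles the degree-$k$ family), but this direction is already known (cf.\ Remark~\ref{expdeep}, since $x^{q-2}$ generates the same word as $\frac{1}{x-0}$ with $0\notin\F_q^*$). For necessity you only announce the plan; the ``dichotomy'' isolating the exceptional monomial $x^{q-2}$ among all degrees $k+1\leqslant d\leqslant q-2$ is precisely the content of the conjecture, and you acknowledge you have not proved it. Moreover, the engine you propose cannot close it in the stated generality: the restricted-sumset bound of Theorem~\ref{SH} is $\min\{p,\,k|S|-k^2+1\}$, which saturates the field only when $q=p$ is prime and $k$ is on the order of $p/2$ (this is where the paper needs $k\geqslant\frac{p-1}{2}$ in Theorem~\ref{mainthm}); for $q=p^m$ with $m\geqslant 2$ the bound caps at $p\ll q$, and for small $k$ it does not saturate either, so the Erd\H{o}s--Heilbronn recursion cannot certify $k+1$ agreements. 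Note also that Conjecture~\ref{conj2} is an open conjecture; the paper itself does not prove it in general.

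For comparison, the paper's actual partial results toward this conjecture take a different route. In Theorem~\ref{geometrythm} (Cases 2 and 3) one bounds $N_{min}(k,q)$ using Segre's and Voloch's classifications of long $[n,3]$ MDS codes together with Lemma~\ref{RL}, concluding that the extended generator matrix $G'=\bigl[\begin{smallmatrix}G\\ u\end{smallmatrix}\bigr]$ of length $q-1$ must itself be (equivalent to) a GRS/Vandermonde matrix; this forces $u(x)=ax^k+f_{\leqslant k-1}(x)$ or $u(x)=bx^{q-2}+f_{\leqslant k-1}(x)$, and it covers the ranges $3\leqslant k<\frac{\sqrt q+1}{4}$ (odd $q$) and $3\leqslant k<\frac{p}{45}$ ($q=p$ prime). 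The deep-hole-tree and Erd\H{o}s--Heilbronn machinery you invoke is the paper's tool for the complementary regime $k\geqslant\frac{p-1}{2}$ over prime fields (Theorem~\ref{mainthm}, which with $D=\F_p^*$ does confirm the conjecture there), so at best your outline reproves that special case; it does not reach the conjecture as stated. To make your write-up honest, either restrict the claim to one of these proven regimes and carry out the deferred step in full, or present the argument as a conditional reduction.
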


Cheng and Murray \cite{CM07} got the first result by reducing the problem to the existence of rational points on a hypersurface over $\F_q$.
\begin{theorem} \cite{CM07}
Let $u\in \F_q^{q}$ such that $1\leqslant d:=\deg(u)-k\leqslant q-1-k$. If $q\geqslant \max(k^{7+\epsilon},d^{\frac{13}{3}+\epsilon})$
for some constant $\epsilon>0$, then $u$ is not a deep hole.
\end{theorem}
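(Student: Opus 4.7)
The plan is to reduce the statement to the existence of an $\F_q$-rational point on a suitable hypersurface and then apply a Lang--Weil/Deligne--Weil bound. The first observation I would make is a translation: $u$ fails to be a deep hole precisely when there exist pairwise distinct $x_1,\ldots,x_{k+1}\in\F_q$ and a polynomial $h(x)\in\F_q[x]$ of degree at most $d-1$ such that $u(x)-F(x)h(x)$ has degree at most $k-1$, where $F(x)=\prod_{i=1}^{k+1}(x-x_i)$. Indeed, this says that $u$ agrees on the $k+1$ points $x_i$ with a polynomial of degree $\leq k-1$, so the Hamming distance from $u$ to the corresponding codeword is at most $q-(k+1)<q-k$.

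The next step is to make this algebraic. Dividing $u$ by $F$ uniquely gives $u=Fh+g$ with $\deg h=d-1$ and $\deg g\leq k$. The condition just above is equivalent to the vanishing of the coefficient of $x^k$ in $g$. Expanding $F$ via the elementary symmetric functions $s_1,\ldots,s_{k+1}$ of $(x_1,\ldots,x_{k+1})$ and computing the division recursively yields an explicit polynomial identity $P(x_1,\ldots,x_{k+1})=0$, where $P$ is symmetric in the $x_i$ and its coefficients are expressions in the coefficients of $u$. Thus the problem becomes: show the affine hypersurface $X\subset\mathbb{A}^{k+1}_{\F_q}$ cut out by $P=0$ carries an $\F_q$-point with pairwise distinct coordinates.

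The main obstacle is proving absolute irreducibility of $P$ over $\overline{\F_q}$, since only then does a Weil-type estimate $|X(\F_q)|=q^k+O(\Delta^{c}\,q^{k-1/2})$ apply, with $\Delta$ controlling the degree of $P$. I would attack this by inspecting the top-degree homogeneous component of $P$ (which turns out to be governed by a simple symmetric form whose factorization is rigid), together with a Bertini-style hyperplane slicing argument that reduces absolute irreducibility of $P$ to that of a plane curve; the latter can be verified by a Newton-polygon or singular-locus analysis. Tracking the degree carefully, the degree in the $x_i$-direction is of order $d$, while the combinatorial complexity of the coefficients scales with $k$, which is exactly where the two numerical hypotheses $q>d^{13/3+\epsilon}$ and $q>k^{7+\epsilon}$ arise when one unpacks the Weil error bound.

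Finally, I would subtract off the diagonal locus where some $x_i=x_j$. This is a union of subvarieties of dimension $\leq k-1$, so it contributes at most $O(q^{k-1})$ points, a strictly lower-order term compared to the main $q^k$ estimate. Combining the Weil bound on $|X(\F_q)|$ with this estimate for the coincidence locus, the hypothesis $q\geq\max(k^{7+\epsilon},d^{13/3+\epsilon})$ forces the number of admissible $\F_q$-points of $X$ to be positive, producing the desired $k+1$ distinct points and hence showing $u$ is not a deep hole.
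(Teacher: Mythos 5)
A point of context first: the paper you are being compared against does not actually prove this statement --- it is quoted from \cite{CM07} in the related-work section, with only the remark that Cheng and Murray obtained it ``by reducing the problem to the existence of rational points on a hypersurface over $\F_q$.'' Your outline follows exactly that advertised route (agreement with a codeword at $k+1$ distinct points, vanishing of the $x^k$-coefficient of the remainder $g=u \bmod \prod_i(x-x_i)$, a symmetric hypersurface $P=0$ in $\mathbb{A}^{k+1}$, an explicit Weil/Lang--Weil estimate, subtraction of the coincidence locus), so in spirit you are reconstructing the cited argument rather than proposing a different one. The translation step itself is correct: $u$ fails to be a deep hole iff some codeword agrees with it at $k+1$ distinct points, and your $P$ is (up to the Vandermonde normalization) the order-$k$ divided difference of $u$, a symmetric polynomial of degree $d$ whose leading form is governed by the leading coefficient of $u$.

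As a proof, however, the proposal has genuine gaps concentrated exactly where the content of the theorem lies. First, absolute irreducibility of $P$ over $\overline{\F_q}$ is the heart of the matter, and you only gesture at it: ``the top-degree component \ldots whose factorization is rigid'' and ``a Newton-polygon or singular-locus analysis'' are placeholders, not arguments. The leading form here is essentially the complete homogeneous symmetric polynomial of degree $d$ in $k+1$ variables, and its irreducibility (in characteristic $p$, uniformly in $d,k$) is itself a nontrivial claim that must be proved; nor do you rule out that special lower-order coefficients of $u$ spoil the slicing/Bertini step. Second, the quantitative conclusion is asserted rather than derived. With a degree-$d$ absolutely irreducible hypersurface in $k+1$ variables, an explicit estimate of Cafure--Matera type gives $|N-q^k|\leqslant (d-1)(d-2)q^{k-1/2}+c\,d^{13/3}q^{k-1}$, and the locus where two coordinates coincide contributes up to about $\binom{k+1}{2}d\,q^{k-1}$ points, not ``$O(q^{k-1})$'': the implied constants depend on $k$ and $d$, which the theorem allows to be polynomially large in $q$, so suppressing them hides precisely the bookkeeping that produces the stated thresholds (one must also check that $P$ does not vanish identically on a hyperplane $x_i=x_j$, otherwise the subtraction collapses). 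In particular, nothing in your sketch explains where $k^{7+\epsilon}$ comes from; a naive unpacking of the bounds above yields conditions of the shape $q\gg d^{13/3+\epsilon}$ and $q\gg k^2 d$, so the exponent $7$ must emerge from the (missing) irreducibility argument or from a finer count, and until that is written down the theorem as stated is not established by your outline.
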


Following a similar approach of  Cheng-Wan \cite{CW07}, Li and Wan \cite{LJW08} improved the above result with Weil's character sum estimate.
\begin{theorem} \cite{LJW08}
Let $u\in \F_q^q$ such that $1\leqslant d:=\deg(u)-k\leqslant q-1-k$. If
\[
q>\max((k+1)^2,d^{2+\epsilon}),k>(\frac{2}{\epsilon}+1)d+\frac{8}{\epsilon}+2
\]
for some constant $\epsilon>0$, then $u$ is not a deep hole.
\end{theorem}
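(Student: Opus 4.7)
The plan is to recast the non-deep-hole condition as the existence of an $\F_q$-point on a symmetric hypersurface and then to bound that point count via Weil's character sum estimate, following the paradigm of Cheng--Wan \cite{CW07}. Since the covering radius of $RS_q(\F_q,k)$ is $q-k$, the word $u$ fails to be a deep hole if and only if there exist $k+1$ distinct points $x_1,\ldots,x_{k+1}\in\F_q$ on which $u$ agrees with some polynomial of degree at most $k-1$. By the Lagrange/Newton formalism this is equivalent to the vanishing of the divided difference
\[
u[x_1,\ldots,x_{k+1}] \;=\; \sum_{i=1}^{k+1} \frac{u(x_i)}{\prod_{j\neq i}(x_i-x_j)}.
\]
With $\deg u=k+d$, this divided difference is a symmetric polynomial $h(x_1,\ldots,x_{k+1})$ of total degree $d$ whose leading symmetric piece (when $u$ is monic) is the complete homogeneous symmetric polynomial $h_d$. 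The problem reduces to showing that the affine hypersurface $h=0$ in $\F_q^{k+1}$ has an $\F_q$-point with pairwise distinct coordinates.

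Next I would expand the zero count as a character sum,
\[
\#\{x\in\F_q^{k+1} : h(x)=0\} \;=\; q^k + \frac{1}{q}\sum_{t\in\F_q^{*}}\sum_{x\in\F_q^{k+1}}\psi(t\,h(x)),
\]
with $\psi$ a nontrivial additive character of $\F_q$. After freezing $k-1$ of the coordinates in a generic way, the inner sums reduce to character sums on curves whose degree and genus are controlled by a polynomial $P(d)$ in $d$, and Weil's bound contributes an error of order $P(d)\,q^{k-1/2}$. The diagonal contribution from tuples with some $x_i=x_j$ is bounded by $O(k^2\,q^{k-1})$ via a union bound combined with a further Weil estimate on the restricted equation. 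Assembling the pieces, the number of good tuples is at least
\[
q^k \;-\; C\,P(d)\,q^{k-\frac12} \;-\; O(k^2\,q^{k-1}).
\]
The hypotheses $q>(k+1)^2$ and $q>d^{2+\epsilon}$ together with $k>(2/\epsilon+1)d+8/\epsilon+2$ are calibrated so that this quantity is strictly positive; the specific $\epsilon$-dependence arises from the need to absorb the factor $P(d)$ into a suitable power of $q^{1/2}$ and to guarantee that enough ``free'' coordinates remain after slicing.

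The principal obstacle, as in all such Weil-type arguments, is verifying the absolute irreducibility of the sliced curve so that Weil's theorem applies with the full $\sqrt{q}$ savings rather than the trivial $O(q)$ bound. One expects this to follow from an analysis of the leading symmetric piece $h_d$, which is absolutely irreducible in the regime $d<k+1$; the precise lower bound on $k$ in the hypothesis reflects the worst-case genus estimate for the sliced curves and the density of good freezing values needed to make the averaging step effective.
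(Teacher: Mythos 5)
This theorem is only quoted in the paper from \cite{LJW08}; the paper gives no proof of it, so your sketch has to be measured against the actual Li--Wan argument. Your opening reduction is fine: $u$ fails to be a deep hole iff it agrees with some polynomial of degree $\leqslant k-1$ at $k+1$ distinct points, iff the divided difference $u[x_1,\ldots,x_{k+1}]$ vanishes at some tuple of distinct coordinates, and for monic $u$ of degree $k+d$ this is a symmetric polynomial of degree $d$ with leading part the complete homogeneous polynomial $h_d$. But from there your analytic step does not work as stated, and the place where it breaks is exactly the point you defer. Freezing all but one (or two) of the $k+1$ variables and applying Weil on the resulting curves cannot give an error of order $P(d)\,q^{k-1/2}$: a one-variable Weil bound saves a single factor $q^{1/2}$ out of the trivial bound $q^{k+1}$ for $\sum_x \psi(t\,h(x))$, so after summing over $t$ the error is of size $\approx d\,q^{k+1/2}$, which swamps the main term $q^k$. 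To get below $q^k$ you need genuine multi-variable cancellation, i.e.\ control of the geometry (absolute irreducibility, singular locus) of the divided-difference hypersurface and of its slices --- precisely the issue you label ``the principal obstacle'' and leave at ``one expects this to follow.'' That is the content of the theorem, not a technical afterthought; the hypersurface route you describe is essentially Cheng--Murray \cite{CM07}, and the reason their result needs the much stronger condition $q\geqslant k^{7+\epsilon}$ is that this geometric control is expensive.

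A second, related symptom: with the error term you claim, the hypothesis $k>(\frac{2}{\epsilon}+1)d+\frac{8}{\epsilon}+2$ would never be needed --- $q>\max((k+1)^2,d^{2+\epsilon})$ alone would make $q^k - CP(d)q^{k-1/2}-O(k^2q^{k-1})$ positive --- so your sketch cannot recover the stated theorem. In Li--Wan's actual proof the deep-hole condition is recast as counting factorizations $u(x)-f(x)=c\prod_{i}(x-a_i)\cdot h(x)$ with $k+1$ distinct roots $a_i$ and $\deg h$ comparable to $d$, and the count is evaluated by character sums over $\F_q$ (Weil for $\sum_x\psi(g(x))$ with $\deg g$ tied to $d$) together with a combinatorial inclusion--exclusion whose error carries binomial-coefficient factors of size roughly $\binom{k+d}{d}$ against a main term $q^k$ versus an error $\approx \binom{k+d}{d}q^{(k+d)/2+1}$; balancing these under $q>d^{2+\epsilon}$ and $q>(k+1)^2$ is exactly what forces the linear-in-$d$ lower bound on $k$ with the $\frac{8}{\epsilon}$ term. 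So either carry out the Li--Wan counting argument, or, if you want to keep the hypersurface formulation, you must prove absolute irreducibility of the sliced varieties and use a Lang--Weil/Cafure--Matera type estimate --- and then you will obtain Cheng--Murray-type hypotheses, not the ones in this statement.
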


Then Liao \cite{Liao11} proved the following result:
\begin{theorem} \cite{Liao11}
Let $r\geqslant 1$ be an integer. For any received word $u\in \F_q^q, r\leqslant d:=\deg(u)-k\leqslant q-1-k$, if
$$
q>\max(2\binom{k+r}{2}+d,d^{2+\epsilon}),k>(\frac{2}{\epsilon}+1)d+\frac{2r+4}{\epsilon}+2
$$
for some constant $\epsilon>0$, then $d(u,RS_q(\F_q,k))\leqslant q-k-r$, which implies that $u$ is not a deep hole.
\end{theorem}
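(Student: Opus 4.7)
The plan is to follow the character sum approach of Cheng--Wan \cite{CW07} and Li--Wan \cite{LJW08}, pushing it one notch further to obtain the sharper error distance bound $q-k-r$.

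First, I would reduce the statement to counting points on a symmetric variety. The inequality $d(u,RS_q(\F_q,k))\leqslant q-k-r$ is equivalent to the existence of a $(k+r)$-element subset $T\subseteq\F_q$ and a polynomial $g\in\F_q[x]$ of degree at most $k-1$ agreeing with $u$ on $T$. Setting $h_T(x)=\prod_{\alpha\in T}(x-\alpha)$ and performing the polynomial division $u(x)=Q(x)h_T(x)+R(x)$ with $\deg R<k+r$, the condition $\deg R\leqslant k-1$ amounts to $r$ polynomial equations $F_1,\ldots,F_r$ in the elementary symmetric functions of the elements of $T$, whose coefficients are determined by the top $d+1$ coefficients of $u(x)$. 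These equations can be read off either by long division or, more cleanly, from the formal Laurent expansion of $u(x)/h_T(x)$.

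Second, I would count such subsets $T$ by counting ordered tuples $(\alpha_1,\ldots,\alpha_{k+r})$ of pairwise distinct elements of $\F_q$ satisfying the $r$ equations, and dividing by $(k+r)!$. Expressing each equation via the orthogonality of additive characters of $\F_q$, the trivial character contributes a main term of size roughly $q^{k}/(k+r)!$, while nontrivial characters yield exponential sums in $k+r$ variables. These are bounded by Weil's theorem, and the pairwise-distinctness restriction is handled by the Li--Wan sieve, which is the source of the term $2\binom{k+r}{2}$ in the first hypothesis. A direct comparison of main term and error, using that each character sum has polynomial argument of total degree at most $d$, gives an error of size $\ll q^{r+(k+r)/2}\cdot\operatorname{poly}(d,k+r)$, and the two quantitative hypotheses on $q$ and $k$ are calibrated precisely so that the main term dominates, forcing the count to be positive.

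The main obstacle I anticipate is non-degeneracy of the character sums: for each nonzero $(b_1,\ldots,b_r)\in\F_q^r$, the polynomial $\sum_{m=1}^{r}b_m F_m(\alpha_1,\ldots,\alpha_{k+r})$, regarded as a polynomial in a single $\alpha_i$ after fixing the others, must have positive degree coprime to $p$ in order for Weil's bound to apply with the sharpest exponent. Degenerate specializations live on a lower-dimensional subvariety, and one has to check that they contribute an acceptable error after the sieve. This is exactly where $q>2\binom{k+r}{2}+d$ enters, bounding the simultaneous degeneracies introduced by the sieve, while $k>(2/\epsilon+1)d+(2r+4)/\epsilon+2$ ensures the arity $k+r$ dominates the algebraic degree $d$ inherited from $u(x)$, so that the Weil bound and sieve combine to give the required savings.
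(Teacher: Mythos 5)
You should first note that the paper does not prove this statement at all: it is quoted verbatim from Liao \cite{Liao11} as background in the ``Related work'' section, so there is no proof in the paper to compare against. Judged on its own terms, your outline follows the right family of ideas (the Cheng--Wan \cite{CW07} / Li--Wan \cite{LJW08} reduction to agreement on a $(k+r)$-subset, detection of the $r$ coefficient conditions by additive characters, Weil-type estimates, and a device to enforce distinctness of the interpolation points), and your first reduction step --- $d(u,RS_q(\F_q,k))\leqslant q-k-r$ iff some $T$ with $|T|=k+r$ supports agreement with a polynomial of degree at most $k-1$, encoded by $r$ symmetric equations coming from the division $u=Qh_T+R$ --- is correct and is indeed how these results begin.

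However, as a proof the proposal has genuine gaps. The decisive analytic step is exactly the one you defer: establishing that for every nonzero $(b_1,\ldots,b_r)$ the phase polynomial $\sum_m b_m F_m$ is nondegenerate in the sense required for the square-root cancellation (and controlling the degenerate locus after the distinctness correction). You explicitly flag this as an ``anticipated obstacle'' rather than resolving it, and without it the error term cannot be bounded at all. Moreover, the quantitative bookkeeping is too coarse to recover the statement: the error in these arguments is not $q^{r+(k+r)/2}\cdot\mathrm{poly}(d,k+r)$ but carries factors that grow exponentially in $k+r$ (of binomial or $d^{k+r}$ type), and it is precisely the balancing of those factors against the main term $\approx q^{k}$, under $q>d^{2+\epsilon}$, that produces the specific condition $k>(\frac{2}{\epsilon}+1)d+\frac{2r+4}{\epsilon}+2$; your sketch asserts that the hypotheses are ``calibrated precisely'' but derives neither inequality, and with the polynomial-size error estimate you state, the calibration would come out different. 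A smaller point: attributing the term $2\binom{k+r}{2}$ to the Li--Wan sieve conflates Liao's argument with the later Zhu--Wan \cite{ZW12} proof; the sieve for distinct coordinates is the tool of \cite{ZW12}, while in Liao's setting the $2\binom{k+r}{2}+d$ threshold arises from a more elementary accounting of coincidences among the $k+r$ points. So the proposal is a reasonable roadmap toward Liao's theorem, but it is not yet a proof: the nondegeneracy of the character sums and the derivation of the two stated numerical hypotheses are missing.
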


Antonio Cafure ect. \cite{CMP12} proved the following result with tools of algebraic geometry:
\begin{theorem}\cite{CMP12}
Let $u\in \F_q^q$ such that $1\leqslant d:=\deg(u)-k\leqslant q-1-k$. If
\[
q>\max((k+1)^2,14d^{2+\epsilon}),k>(\frac{2}{\epsilon}+1)d,
\]
for some constant $\epsilon>0$, then $u$ is not a deep hole.
\end{theorem}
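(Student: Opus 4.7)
The plan is to show that under the stated hypotheses there exist $k+1$ distinct elements $x_1, \ldots, x_{k+1} \in \F_q$ and a polynomial $f(x) \in \F_q[x]$ of degree at most $k-1$ such that $u(x_i) = f(x_i)$ for all $i$; this forces $d(u, RS_q(\F_q,k)) \leqslant q - k - 1$ and hence $u$ is not a deep hole. Equivalently, writing $u(x) - f(x) = g(x) \prod_{i=1}^{k+1}(x - x_i)$ with $\deg g = d - 1$, it suffices to find $x_1,\ldots,x_{k+1}$ and $g(x)$ such that the coefficients of $x^k, x^{k+1}, \ldots, x^{k+d}$ in $g(x)\prod_i(x - x_i)$ match those of $u(x)$. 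This is a system of $d+1$ polynomial equations in the $k+1+d$ unknowns $(x_1, \ldots, x_{k+1}, c_0, \ldots, c_{d-1})$, where the $c_j$ are the coefficients of $g$.

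First I would eliminate the $c_j$: the $d$ equations arising from the coefficients of $x^{k+1}, \ldots, x^{k+d}$ are triangular in $c_0, \ldots, c_{d-1}$ and let each $c_j$ be written as a polynomial in the power sums $\sigma_m = \sum_i x_i^m$ and the known coefficients of $u(x)$. Substituting back into the remaining $x^k$-equation yields a single symmetric polynomial identity $P(x_1,\ldots,x_{k+1}) = 0$ of total degree bounded linearly in $d$. The hypersurface $V = \{P = 0\}$ in affine $(k+1)$-space is $k$-dimensional, and its $\F_q$-points with pairwise distinct coordinates are precisely the configurations we seek. An effective Lang-Weil estimate then supplies $|V(\F_q)| = q^k + O(\Delta^2 q^{k - 1/2})$ once $V$ is absolutely irreducible of geometric degree $\Delta$; the $\binom{k+1}{2}$ forbidden diagonals $\{x_i = x_j\}$ each contribute only $O(q^{k-1})$ points. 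The hypotheses $q > 14 d^{2+\epsilon}$, $q > (k+1)^2$, and $k > (2/\epsilon+1) d$ are the quantitative thresholds needed for the main term to dominate both the Weil error and the diagonal corrections.

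The main obstacle is proving absolute irreducibility of $V$ with a geometric degree polynomial in $d$ and mild in $k$; this is the step where the earlier treatments in \cite{CM07}, \cite{LJW08} and \cite{Liao11} lost the most ground. I would exploit the $S_{k+1}$-action on $V$: any nontrivial factorization over $\overline{\F_q}$ either respects $S_{k+1}$, in which case it descends via the fundamental theorem on symmetric polynomials to a factorization of an auxiliary polynomial in the elementary symmetric functions, or else splits into a Galois orbit of conjugate factors whose degrees must partition $\deg P$. Analyzing the leading form of $P$ at infinity, together with a Bertini-type genericity argument in the elementary symmetric coordinates, should rule out both possibilities in the stated range of $d$. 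Once irreducibility is secured, the degree bound on $\Delta$ is a routine consequence of the construction, and the theorem follows.
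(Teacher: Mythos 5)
There is nothing in this paper to compare your argument against: the statement is quoted from \cite{CMP12} as background in the related-work section, and the authors give no proof of it here (their own contributions are Theorems \ref{mainthm} and \ref{geometrythm}, proved by entirely different means, namely deep hole trees plus additive combinatorics, and finite-geometry results on MDS codes). So your proposal has to be judged on its own, and as a proof it has a genuine gap. The opening reduction is fine and standard: $u$ fails to be a deep hole iff it agrees with some codeword on at least $k+1$ points, i.e.\ iff $u(x)-f(x)=g(x)\prod_{i=1}^{k+1}(x-x_i)$ is solvable with the $x_i$ distinct, and eliminating the coefficients of $g$ leaves one symmetric equation $P(x_1,\ldots,x_{k+1})=0$ of degree controlled by $d$. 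This is exactly the framework of \cite{CM07} and its successors. But everything that makes the theorem true with the stated parameters is concentrated in the step you dispose of with ``should rule out both possibilities'': proving that the hypersurface $V=\{P=0\}$ is absolutely irreducible (or, more precisely, controlling its singular locus well enough to apply an effective point-count), with degree data good enough that the error terms are beaten by $q>\max((k+1)^2,14d^{2+\epsilon})$ and $k>(\frac{2}{\epsilon}+1)d$. Your proposed mechanism --- the $S_{k+1}$-action forcing any factorization to descend to the elementary symmetric coordinates, plus a Bertini-type genericity argument --- is a heuristic, not an argument: the polynomial $P$ is a fixed, highly structured polynomial determined by the coefficients of $u$, so ``genericity'' is not available, and the Galois-orbit alternative (conjugate non-symmetric factors) is not excluded by anything you wrote. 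This analysis of the geometry of the symmetric hypersurface, including its singularities, is precisely the content of the algebraic-geometric work in \cite{CMP12}; without it your argument proves nothing beyond the weaker, earlier results.

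Two further points need attention even granting irreducibility. First, you need rational points with \emph{pairwise distinct} coordinates; discarding the diagonals $\{x_i=x_j\}$ costs $O(q^{k-1})$ only if you also bound the degree of $V\cap\{x_i=x_j\}$ and show it is proper, and the accumulated constant $\binom{k+1}{2}\deg(V)$ must then be compared against the main term --- this is where the hypothesis $q>(k+1)^2$ enters and it has to be tracked explicitly, not asserted. Second, an effective Lang--Weil bound with explicit constants (not $O(\cdot)$ notation) is required to recover the specific threshold $14d^{2+\epsilon}$; as written, your error terms are qualitative, so the quantitative statement of the theorem does not follow. In short: correct first reduction, but the absolute-irreducibility/singularity analysis and the explicit estimates --- the actual substance of the cited theorem --- are missing.
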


Using Weil's character sum estimate and Li-Wan's new sieve \cite{LW10} for distinct coordinates counting, Zhu and Wan \cite{ZW12}
showed the following result:
\begin{theorem} \cite{ZW12}
Let $r\geqslant 1$ be an integer. For any received word $u\in \F_q^q, r\leqslant d:=\deg(u)-k\leqslant q-1-k$, there are positive
constants $c_1$ and $c_2$ such that if
$$
d<c_1q^{1/2},(\frac{d+r}{2}+1)\log_2(q)<k<c_2q,
$$
then $d(u,RS_q(\F_q,k))\leqslant q-k-r$.
\end{theorem}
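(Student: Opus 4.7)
The plan is to produce, for a given $u$ of degree $k+d$, a subset $T\subset\F_q$ of size $k+r$ with pairwise distinct elements such that the Lagrange interpolant of $u$ at $T$ has degree at most $k-1$. Such a $T$ is exactly what is needed: the interpolant is then a codeword of $RS_q(\F_q,k)$ that agrees with $u$ on $T$, so $d(u,RS_q(\F_q,k))\leqslant q-(k+r)$. Writing $u(x)=\sum_{i\leqslant k+d}a_ix^i$ with $a_{k+d}\neq 0$, I would express the ``degree drop'' condition via divided differences: the interpolant has degree $\leqslant k-1$ if and only if $r$ explicit symmetric functions $F_0(x_1,\ldots,x_{k+r}),\ldots,F_{r-1}(x_1,\ldots,x_{k+r})$, namely the coefficients of $x^k,x^{k+1},\ldots,x^{k+r-1}$ in the Lagrange formula, all vanish. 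The total degrees of these $F_j$ depend only on $d$ and $r$, not on $q$.

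The next step is to count ordered $(k+r)$-tuples of pairwise distinct elements of $\F_q$ satisfying $F_0=\cdots=F_{r-1}=0$, and to show the count is positive. This is precisely the regime of the Li--Wan distinct-coordinate sieve \cite{LW10}, which expresses the distinct-coordinate count as a signed sum, indexed by the set-partition lattice of $\{1,\ldots,k+r\}$, of unrestricted solution counts over the corresponding symmetric products. After Fourier inversion of the defining equations into additive characters, each stratum becomes an exponential sum on an algebraic variety, and Weil's estimate bounds each such sum. The finest partition contributes the main term of order $q^k/(k+r)!$, while each coarser partition contributes a Weil-type error of the form $O(d^{O(1)}\,q^{k-1/2})$.

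The key obstacle, and what forces the quantitative hypotheses, is that the sum over the partition lattice carries a combinatorial weight of Bell-number order $2^{\Theta(k+r)}$. For the main term to survive, one needs $q^k/(k+r)!$ to dominate $2^{O(k+r)}d^{O(1)}q^{k-1/2}$, which after taking logarithms is exactly the condition $k>(\tfrac{d+r}{2}+1)\log_2 q$; this is the origin of the logarithmic lower bound on $k$. The assumption $d<c_1q^{1/2}$ ensures absolute irreducibility of the exponential-sum varieties encountered at each stratum, without which Weil's bound is inapplicable; and $k<c_2 q$ prevents the denominator $(k+r)!$ from overtaking $q^k$. The hardest technical step is verifying absolute irreducibility uniformly across the partition lattice while tracking the precise combinatorial weights produced by the sieve. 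Once the count is shown to be positive, at least one valid $T$ exists, and the bound $d(u,RS_q(\F_q,k))\leqslant q-k-r$ follows.
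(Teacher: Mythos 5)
First, note that the paper does not prove this statement at all: it is quoted from Zhu--Wan \cite{ZW12} as background, so the only meaningful comparison is with the known argument of \cite{ZW12}. Your skeleton does match that argument: reduce $d(u,RS_q(\F_q,k))\leqslant q-k-r$ to the existence of $k+r$ distinct evaluation points on which the interpolant of $u$ has degree $\leqslant k-1$ (equivalently, $r$ divided-difference/coefficient conditions of degree controlled by $d$ and $r$ vanish), and then count such distinct tuples by combining the Li--Wan sieve \cite{LW10} with Weil's character-sum estimates. Up to that point your proposal is the right plan.

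The gap is in the quantitative core, which is exactly the part the theorem is about. Your stated comparison -- main term $q^k/(k+r)!$ against an error of size $2^{O(k+r)}d^{O(1)}q^{k-1/2}$ -- does not produce the hypothesis $k>(\tfrac{d+r}{2}+1)\log_2 q$; it is equivalent to $(k+r)!\,2^{O(k+r)}d^{O(1)}<q^{1/2}$, i.e.\ an \emph{upper} bound on $k$ of order $\log q/\log\log q$, which is incompatible with the theorem's regime $k<c_2q$ and never explains the exponent $\tfrac{d+r}{2}$. In the actual accounting, the factor $q^{(d+r)/2}$ in the error arises from Weil bounds attached to the $r$ conditions (whose degrees are of order $d$), not raised to the $(k+r)$-th power per variable, while the compensating gain of order $2^{k}$ comes from comparing the binomial-type main term $\binom{q}{k+r}q^{-r}$ with the sieve's error bound when $k<c_2q$; the condition $2^k>q^{(d+r)/2+1}$ is precisely this comparison. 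Two further misstatements feed the problem: Bell numbers are not of order $2^{\Theta(k+r)}$, and the entire point of the Li--Wan sieve is that, for symmetric counting problems, the signed sum over the partition lattice collapses so that no Bell-number (or factorial) weight survives -- if it did, a logarithmic lower bound on $k$ could not suffice. Finally, $d<c_1q^{1/2}$ is not an absolute-irreducibility hypothesis; it is a size condition guaranteeing that the Weil-type terms (of shape $(d+r)q^{1/2}$ per collapsed sum) stay below the trivial bound $q$. As written, the proposal would not let you verify the stated hypotheses, so the proof of the key counting inequality is missing.
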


The deep hole problem for Reed-Solomon codes are also closely related to
the famous MDS conjecture in coding theory. On one hand, GRS codes are MDS codes.
On the other hand, it is known that all long enough MDS codes are essentially GRS codes.
Following the notation of \cite{RL89}, let $N_{min}(k,q)$ be the minimal integer, if any, such that every $[n,k]$ MDS code over $GF(q)$ with
$n>N_{min}(k,q)$ is GRS and be $q+2$ if no such integer exists. For the case of $k=3$, Segre \cite{Segre55} obtained the following result:
\begin{theorem} \cite{Segre55} \label{B.Segre}
If $q$ is odd, every $[n,3]$ MDS code over $GF(q)$ with $q-\dfrac{\sqrt{q}-7}{4} < n \leqslant q+1$ is GRS.
\end{theorem}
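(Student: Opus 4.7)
The plan is to translate the statement into the language of finite projective geometry via the arc--code correspondence and then follow Segre's 1955 theory of tangents to an arc. A $[n,3]$ MDS code over $GF(q)$ corresponds, up to column scaling of the generator matrix, to an $n$-arc $\mathcal{K}$ in $PG(2,q)$, that is, a set of $n$ points with no three collinear. Under this correspondence the GRS codes of dimension $3$ are exactly those whose arc lies on a nondegenerate conic, because the rational normal curve of degree $2$ is a conic. Hence it suffices to prove the following geometric statement: for odd $q$, every $n$-arc in $PG(2,q)$ with $n > q - (\sqrt{q}-7)/4$ is contained in a (necessarily unique) conic.

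First I would count tangents. Since every line meets $\mathcal{K}$ in $0$, $1$, or $2$ points, each point of $\mathcal{K}$ lies on exactly $t := q+2-n$ tangent lines, so $\mathcal{K}$ has $nt$ tangents in total, and the hypothesis on $n$ forces $t < (\sqrt{q}+1)/4 + 2$. The core step is Segre's Lemma of Tangents: choose three arc points $P_1, P_2, P_3$ as the vertices of a coordinate triangle; parametrize the $t$ tangents at $P_i$ by the coordinate at which each cuts the opposite side; and let $\pi_i$ denote the product of these $t$ parameters. Multiplying together, for every $Q \in \mathcal{K}\setminus\{P_1,P_2,P_3\}$, the collinearity identity expressing that the three bisecants $P_iQ$ meet the opposite sides at points whose cross-ratios multiply to $-1$, and then cancelling because each non-tangent secant is counted twice, one obtains the identity
\[
\pi_1 \pi_2 \pi_3 = -1.
\]
Varying the triple $(P_1,P_2,P_3)$ over the $\binom{n}{3}$ possibilities produces enough multiplicative constraints to construct a nonzero homogeneous dual-plane polynomial $F$ of degree $t$ vanishing on every tangent of $\mathcal{K}$.

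The main obstacle, and the heart of Segre's argument, is the final geometric step: showing that when $t$ is small compared to $\sqrt{q}$, the class-$t$ envelope cut out by $F$ must be the dual of a conic $\mathcal{C}$, whence Bezout in the dual plane forces $\mathcal{K}\subseteq \mathcal{C}$. This uses intersection-theoretic estimates exploiting the oddness of $q$ (so that the squaring map on $GF(q)^*$ has image of index $2$ and quadratic-residue identities are available); the explicit bound $(\sqrt{q}-7)/4$ arises from balancing the class $t$ of the envelope against the $\binom{t+2}{2}$ linear conditions in the space of dual polynomials of degree $t$, together with a Stohr--Voloch style count of $GF(q)$-points on the envelope. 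Uniqueness of the conic is automatic, since any two distinct conics meet in at most $4$ points while $\mathcal{K}$ contains $n \geqslant 5$ points in general position.
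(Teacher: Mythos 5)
This statement is not proved in the paper at all: it is quoted as background, with the proof deferred entirely to Segre's 1955 work, so there is no internal argument to compare yours against. Judged on its own, your proposal correctly sets up the standard translation -- a $[n,3]$ MDS code is, up to column scaling, an $n$-arc in $PG(2,q)$, and being GRS means the arc lies on a (possibly extended) conic -- and correctly identifies the Segre machinery (tangent counting with $t=q+2-n$, the Lemma of Tangents $\pi_1\pi_2\pi_3=-1$, an algebraic envelope containing all tangents, and a final step forcing the arc onto a conic). But as a proof it has a genuine gap: the entire content of the theorem lives in the step you label ``the main obstacle,'' namely that for $t$ small relative to $\sqrt{q}$ the envelope must contain a conic envelope whose dual conic contains $\mathcal{K}$, and this is only asserted. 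Your stated mechanism for it is not Segre's argument and would not obviously work: St\"ohr--Voloch theory postdates this theorem by three decades and is not needed, and ``balancing the class $t$ against $\binom{t+2}{2}$ linear conditions'' does not by itself show the envelope splits off a conic component, nor that the points of the arc (not just its tangents) lie on that conic; Segre's proof requires a separate counting argument here, and this is exactly where the constant $(\sqrt{q}-7)/4$ is earned.

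There are also technical inaccuracies that would have to be repaired even in a faithful sketch. For $q$ odd the algebraic envelope associated to the arc has class $2t$, not $t$ (class $t$ is the even-characteristic case), which changes the Bezout-type comparison in the final step. Your derivation of the Lemma of Tangents is garbled: for each $Q\in\mathcal{K}\setminus\{P_1,P_2,P_3\}$ the product of the three parameters of the concurrent bisecants $P_iQ$ is $1$ (a Ceva-type identity), and the $-1$ comes from the product of all elements of $\F_q^*$ over the lines through each vertex; bisecants are not ``counted twice.'' Finally, the hypothesis gives $t<(\sqrt{q}+1)/4$, not $t<(\sqrt{q}+1)/4+2$. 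None of these slips is fatal to the outline, but together with the unproved key step they mean the proposal is a roadmap to Segre's theorem rather than a proof of it.
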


When $q=p$ is a prime, Voloch \cite{Voloch90} obtained the following result:
\begin{theorem}\cite{Voloch90}\label{Voloch}
If $p$ is an odd prime number, every $[n,3]$ MDS code over $GF(p)$ with $p-\dfrac{p}{45}+2< n \leqslant p+1$ is GRS.
\end{theorem}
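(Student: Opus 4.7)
The plan is to translate the coding-theoretic statement into a question about arcs in the projective plane and then attack that question with Segre's envelope construction together with Stöhr--Voloch bounds on rational points of plane curves in characteristic $p$.

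First, I would use the classical correspondence for $k=3$: up to monomial/column-scaling equivalence, a $[n,3]$ MDS code over $\F_p$ is determined by an $n$-arc in $PG(2,p)$, i.e.\ a set of $n$ points no three of which are collinear, and the code is a GRS code if and only if the arc lies on a nondegenerate conic. So Theorem~\ref{Voloch} reduces to: every $n$-arc in $PG(2,p)$ with $n > p - p/45 + 2$ is contained in a conic.

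Next, write $t = p+1-n$, so by hypothesis $t < p/45 - 1$. Through any point $P$ of an $n$-arc $K$ there pass $n-1 = p-t$ secant lines, leaving $(p+1)-(n-1) = t+1$ tangent lines, so $K$ has in total $n(t+1) = (p+1-t)(t+1)$ tangents. I would invoke Segre's envelope theorem (valid because $p$ is odd): the tangent lines of $K$, regarded as points of the dual plane $PG(2,p)^{\vee}$, all lie on an algebraic curve $\Gamma$ of degree at most $2t$. Moreover, if $K$ is not contained in a conic then $\Gamma$ is not a double conic, and in fact one has explicit control over how $\Gamma$ can reduce.

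Then I would bound $|\Gamma(\F_p)|$ from above via the Stöhr--Voloch theorem. For an irreducible plane curve $C$ of degree $d$ over $\F_p$ one gets an estimate roughly of the form $|C(\F_p)| \leq \tfrac12 d(p+d-2)/\nu$, where $\nu\geq 1$ is a Frobenius-order parameter; when $C$ is Frobenius-nonclassical the bound is strictly sharper than Hasse--Weil, which is precisely the source of the improvement over Segre's Theorem~\ref{B.Segre} in the prime case. Summing over the irreducible components of $\Gamma$, this yields an upper estimate of the shape $|\Gamma(\F_p)| \leq tp + O(t^2)$, while each of the $(p+1-t)(t+1)$ tangents of $K$ contributes a distinct $\F_p$-rational point of $\Gamma$. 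For $t$ below a specific multiple of $p$, the two estimates are incompatible, forcing $K$ onto a conic.

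The hardest part is the final numerical comparison. One must rule out pathological factorizations of $\Gamma$ (e.g.\ many linear components absorbing all the tangent-points), and for each surviving component decide whether it is Frobenius-classical or nonclassical before applying the appropriate Stöhr--Voloch inequality. The precise constant $1/45$ in the theorem is the outcome of this bookkeeping: a cruder execution recovers only Segre's weaker $\sqrt{p}/4$-type threshold, so getting the linear-in-$p$ improvement really requires the characteristic-$p$ refinement of Hasse--Weil on the envelope curve together with a careful case analysis of its components.
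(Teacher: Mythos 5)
This statement is not proved in the paper at all: it is imported verbatim from \cite{Voloch90} and used later as a black box in Case 3 of Theorem \ref{geometrythm}, so there is no ``paper proof'' to compare against. Your outline does correctly reproduce the strategy of Voloch's original argument: the standard dictionary between $[n,3]$ MDS codes over $\F_p$ and $n$-arcs in $PG(2,p)$ (with ``GRS'' corresponding to ``contained in a conic''), Segre's lemma of tangents giving a dual envelope of the tangent lines in odd characteristic, and the replacement of Hasse--Weil by the St\"ohr--Voloch bound on that envelope to turn Segre's $\sqrt{q}$-type threshold into one linear in $p$.

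As a proof, however, what you wrote is only a scaffold, and the pieces you defer are exactly the ones carrying the content of the theorem. First, the degree bookkeeping: with your normalization $t=p+1-n$ there are $t+1$ tangents through each point of the arc, and Segre's envelope has class $2(t+1)$, not $2t$; such constants matter when the final step is a numerical comparison that produces $1/45$. Second, and more importantly, you never identify where the hypothesis that the field has \emph{prime} order enters. The St\"ohr--Voloch inequality alone, applied to a curve of degree $O(t)$, gives a point count of order $tp$, which is the same order as the number of tangent lines, so no contradiction falls out ``for free''; Voloch's argument needs the classification/exclusion of Frobenius-nonclassical components over a prime field (such curves cannot have small degree when $q=p$) together with a multiplicity count of how the tangent lines sit on the envelope, and this is precisely why the analogous statement over non-prime $q$ only achieves the weaker $c\sqrt{q}$ threshold of Theorem \ref{B.Segre}. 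Until you carry out that component analysis and the resulting arithmetic, the constant $p/45$ is unsupported, so the proposal should be regarded as a correct identification of the known method rather than a proof.
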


Further, there is a relation for $N_{min}(k+1,q)$ and $N_{min}(k,q)$ \cite{RL89} as follows:
\begin{lemma}\cite{RL89}\label{RL}
For $3\leqslant k \leqslant q-2$, we have
$$N_{min}(k+1,q)\leqslant N_{min}(k,q)+1.$$
\end{lemma}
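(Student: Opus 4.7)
The plan is to obtain the bound by shortening an $[n,k{+}1]$ MDS code down to an $[n{-}1,k]$ MDS code, where the definition of $N_{min}(k,q)$ forces GRS structure. Let $C$ be an $[n,k+1]$ MDS code over $GF(q)$ with $n > N_{min}(k,q)+1$. For each position $i$, the shortened code $C_i$, obtained by taking codewords $c \in C$ with $c_i = 0$ and projecting off the $i$-th coordinate, is $[n-1,k]$ MDS by standard MDS closure properties. Since $n-1 > N_{min}(k,q)$, each $C_i$ is GRS by definition of $N_{min}(k,q)$.

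I intend to use two shortenings, say $C_1$ and $C_2$, to extract a global GRS structure for $C$. From $C_1$ I get distinct evaluation points $\alpha_2,\ldots,\alpha_n \in GF(q)$ and multipliers $v_2,\ldots,v_n \in GF(q)^*$ such that
\[
C_1 = \{(v_2 f(\alpha_2),\ldots,v_n f(\alpha_n)) : f \in GF(q)[x],\ \deg f \leqslant k-1\}.
\]
From $C_2$ I get an analogous parameterization on positions $\{1,3,\ldots,n\}$. GRS parameterizations are unique up to a common affine change of variable on the evaluation points together with a common scalar rescaling of the multipliers (and permutation of positions), so I would renormalize the $C_2$-parameterization to agree with the $C_1$-parameterization on the $n-2$ overlapping positions $\{3,\ldots,n\}$. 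This matching pins down a candidate evaluation point $\alpha_1 \in GF(q)$ and multiplier $v_1 \in GF(q)^*$ for the first coordinate.

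To conclude I must verify two things. First, $\alpha_1 \notin \{\alpha_2,\ldots,\alpha_n\}$: otherwise one can combine the GRS generators attached to the two coinciding positions to produce a nonzero codeword of $C$ of weight strictly less than $n-k$, contradicting the MDS property. Second, $C$ actually equals the $[n,k+1]$ GRS code $\widetilde{C}$ with parameters $(\alpha_1,\ldots,\alpha_n;v_1,\ldots,v_n)$. Both $C$ and $\widetilde C$ are $[n,k+1]$ MDS codes and both contain the $[n-1,k]$ GRS code $C_1$ as a codimension-one subspace (embedded with a $0$ in coordinate $1$); matching one additional codeword with nonzero first coordinate forces $C = \widetilde C$.

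The main obstacle is the gluing step: ensuring that the parameterizations arising from the two distinct shortenings can genuinely be brought into agreement on their $n-2$ common positions. This relies on the uniqueness-of-parameterization theorem for GRS codes, which requires that the overlap $n-2$ be large enough relative to $k$ to uniquely determine the parameters. The hypothesis $n > N_{min}(k,q)+1$ is precisely what supplies both the GRS structure of the shortenings and an overlap substantial enough for the uniqueness argument to succeed; carefully tracking the degrees of freedom in the rescaling group and confirming the resulting $\alpha_1$ lies in $GF(q)$ rather than an extension is where the bulk of the technical work sits.
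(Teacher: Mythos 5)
You should first note that the paper contains no proof of this lemma at all: it is quoted from \cite{RL89}, so there is nothing internal to compare against. Your skeleton is the standard shortening-and-gluing route and is viable: shorten the $[n,k+1]$ MDS code $C$ at two coordinates to get $k$-dimensional GRS codes $C_1,C_2$ (legitimate, since $n-1>N_{min}(k,q)$ and shortening preserves MDS), observe that the re-embedded copies $\widehat{C_1},\widehat{C_2}$ (with a $0$ inserted at the shortened position) intersect in dimension $k-1$, hence $C=\widehat{C_1}+\widehat{C_2}$, and then exhibit one $(k+1)$-dimensional GRS code $\widetilde C$ containing both, which forces $C=\widetilde C$ by dimension count. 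Your device for excluding a collision of the glued-in evaluation point with an existing one (subtract matched generators of the two shortenings to get a nonzero codeword of weight at most $2<n-k$) is also sound.

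There are, however, concrete gaps in how you propose to execute the gluing. First, the uniqueness statement you invoke is misquoted: two parameterizations of one and the same GRS code are related by a fractional linear ($\mathrm{PGL}(2,q)$, M\"obius) transformation of the evaluation points, not merely an affine one, and the multipliers change by position-dependent factors of the form $(c\alpha_i+d)^{k-1}$ times a common scalar, not by a single scalar; there is also no ``permutation of positions'' freedom, since the coordinates are fixed. With only affine changes you cannot in general bring the $C_2$-parameterization into agreement with the $C_1$-parameterization on the overlap. Once the full M\"obius group is used, the genuine degenerate case is not that the new point $\alpha_1$ lies in an extension field (impossible, since $\mathrm{PGL}(2,q)$ preserves $\mathbb{P}^1(\F_q)$) but that it lands at $\infty$, i.e.\ the glued code acquires a column proportional to $(0,\ldots,0,1)^{T}$; for $n\leqslant q$ this is repaired by a further M\"obius map sending an unused point of $\mathbb{P}^1(\F_q)$ to $\infty$, and this case must be treated explicitly. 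Second, the step that carries the real content is left unexecuted: after matching on coordinates $3,\ldots,n$ one must verify multiplier consistency, namely that shortening $GRS_{k+1}(\alpha,w)$ at position $j$ yields $GRS_k$ with multipliers $w_i(\alpha_i-\alpha_j)$, and then solve for $w_1,\ldots,w_n$ so that the shortenings of $\widetilde C$ at positions $1$ and $2$ are exactly $C_1$ and $C_2$; ``matching one additional codeword'' does not substitute for this computation, because the only codewords of $C$ you know explicitly are those of $\widehat{C_1}\cup\widehat{C_2}$. Finally, the uniqueness theorem is applied to the common shortening $C_{12}$ of dimension $k-1$ and length $n-2$, which requires $k\geqslant 3$ (given) and $n\geqslant k+3$; the borderline lengths $n\leqslant k+2$ must be disposed of by a separate, routine argument.
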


Simeon Ball \cite{Ball12} showed the following result:
\begin{theorem}\cite{Ball12}\label{Simeon}
Let $S$ be a set of vectors of the vector space $\F_q^k$, with the
property that every subset of $S$ of size $k$ is a basis. If
$|S|=q+1$ and $k\leqslant p$ or $3\leqslant q-p+1\leqslant
k\leqslant q-2$, where
 $p$ is the characteristic of $\F_q$, then $S$ is equivalent to the following set:
$$\{(1,\alpha,\alpha^2,\ldots,\alpha^{k-1})\mid \alpha\in \F_q\}\cup\{(0,\ldots,0,1)\}.$$
\end{theorem}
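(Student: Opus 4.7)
The plan is to reformulate the hypothesis geometrically and then attack it via Segre's lemma of tangents combined with a polynomial method of the kind that yields Erd\"os--Heilbronn-type estimates. First I would identify $S$ with the point set of an arc in $PG(k-1,q)$: the basis condition on every $k$-subset is equivalent to saying that no $k$ distinct vectors of $S$ lie on a common hyperplane. Since $|S|=q+1$, the goal becomes to show that this arc is projectively equivalent to the standard normal rational curve together with the extra point $(0,\ldots,0,1)$.

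The main tool will be Segre's lemma of tangents in its higher-dimensional form. Fix any $k-2$ points $A\subset S$; the hyperplanes through $A$ that are tangent to $S$ (meeting $S$ in exactly $k-1$ points, i.e.\ in $A$ together with one additional point of $S$) are in bijection with the points of $S\setminus A$, and the product of the linear forms defining them, evaluated along a fixed transversal line, yields a multiplicative identity modulo $\F_q^*$. Iterating this identity as $A$ varies through two slightly different choices, and then taking the ratio, would produce a rational identity on $\F_q$ that one can convert into the statement that a specific polynomial $F(X,Y)\in\F_q[X,Y]$ of bounded degree vanishes on a large product set $T\times T'\subset\F_q\times\F_q$.

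The heart of the argument is a dimension count: on the one hand, $\deg F$ is controlled by $|S|-k$ and $k$; on the other, the size of $T\times T'$ is too large for $F$ to vanish there unless $F$ has a very specific factored form, which can only occur when $S$ is equivalent to the claimed normal rational curve. The hypothesis on $k$ (either $k\leqslant p$ or $3\leqslant q-p+1\leqslant k\leqslant q-2$) enters at exactly this point, because it is precisely the regime in which the characteristic $p$ does not kill the leading coefficients coming from differentiating (or multiplying) through binomial factors of the form $\binom{k-1}{i}$, so that the Newton-type identities used to bound $\deg F$ remain valid and the Erd\"os--Heilbronn-style lower bound on $|T+T'|$ or $|T\dotplus T'|$ goes through.

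I expect the main obstacle to be the precise bookkeeping in step two: keeping track of the scalar factors in Segre's identity across different choices of $A$, and showing that after cancellation one really does get a single polynomial $F$ vanishing on a product set large enough to force the extremal structure. The characteristic-dependent thresholds on $k$ are exactly the thresholds at which this bookkeeping can be pushed through, which is why both ranges appear symmetrically in the statement via the duality $k\leftrightarrow q+1-k$ of the underlying arc.
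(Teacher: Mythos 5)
First, a point of reference: the paper does not prove Theorem~\ref{Simeon} at all. It is quoted verbatim from Ball \cite{Ball12} and used as a black box in Case~1 of the proof of Theorem~\ref{geometrythm}; the additive-combinatorics machinery in this paper (Theorems~\ref{SH} and \ref{2sum}) is deployed only for the deep-hole-tree argument of Theorem~\ref{mainthm}, not for this statement. So there is no ``paper proof'' to match your attempt against, and what you are really proposing is a reconstruction of Ball's theorem itself.

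As such a reconstruction, your proposal identifies the right toolkit but does not constitute a proof, and its central step has a genuine gap. The translation of the hypothesis into an arc of size $q+1$ in $PG(k-1,q)$, the role of Segre's lemma of tangents in its higher-dimensional form, and the appearance of the second range of $k$ via the duality $k\leftrightarrow q+1-k$ of arcs are all faithful to Ball's actual argument. But the decisive mechanism you propose --- extracting from the tangent identities a single bivariate polynomial $F(X,Y)$ of bounded degree vanishing on a large product set $T\times T'$, and then concluding by an Erd\"os--Heilbronn-type counting/dimension argument --- is asserted rather than constructed, and it is not how the classification is actually forced. Ball's proof proceeds by fixing a suitably scaled coordinate frame on the arc, interpolating the tangent function by polynomials whose coefficients are controlled by the lemma-of-tangents identities, and then showing by comparison of coefficients (and an induction extending subarcs) that the points must satisfy the equations of a normal rational curve; the hypothesis $k\leqslant p$ enters because these interpolation and coefficient-comparison steps involve polynomials of degree less than $p$ and scalars (binomial-type products) that must not vanish in characteristic $p$, not because a sumset lower bound of Erd\"os--Heilbronn type is needed. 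A vanishing statement on a product set, even a large one, would at best show $F$ has a special factorization; it would not by itself recover the full projective equivalence with $\{(1,\alpha,\ldots,\alpha^{k-1})\}\cup\{(0,\ldots,0,1)\}$, and you give no construction of $F$, no degree bound, and no argument for why the extremal structure is the unique survivor. In short: right general landscape, but the heart of the theorem --- the bookkeeping you yourself flag as the main obstacle --- is exactly what is missing, so the proposal should be regarded as a plan rather than a proof.
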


\subsection{Our result}

In this paper, we classify the deep holes in many cases.
Firstly, we show:
\begin{theorem}\label{mainthm}
Let $p>2$ be a prime number, $k\geqslant\frac{p-1}{2}, D=\{\alpha_1,\alpha_2,\ldots,\alpha_n\}$ with $k < n\leqslant p$.
The only deep holes of $RS_p(D,k)$ are generated by functions which are equivalent to the following:
\[
f(x)=x^k,\quad f_{\delta}(x)=\frac{1}{x-\delta},
\]
where $\delta\in \F_p\setminus D$. Here two  functions $f(x)$ and
$g(x)$ are equivalent if and only if there exists $a\in \F_p^*$ and
$h(x)$ with degree less than $k$ such that
\[
g(x)=af(x)+h(x).
\]
\end{theorem}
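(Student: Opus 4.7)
The plan is to recast the deep-hole condition as an MDS-extension problem, grow the associated MDS code to length $p+1$, and invoke Ball's theorem (Theorem~\ref{Simeon}) to pin down $u$.

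If $u$ is a deep hole of $RS_p(D,k)$, then $C := RS_p(D,k)+\langle u\rangle$ is a $(k{+}1)$-dimensional MDS code: a nonzero codeword $f+\lambda u$ with $\deg f<k$ has Hamming weight $\geq n-k+1$ when $\lambda=0$ (MDS of $RS_p(D,k)$) and $\geq n-k$ when $\lambda\neq 0$, the latter because the deep-hole property caps the agreements of $u$ with any polynomial of degree $<k$ at $k$ positions. I would then try to extend $C$ to an $[p+1,k+1]$ MDS code $\tilde C$ on $\F_p\cup\{\infty\}$ one coordinate at a time. Adding $\beta\in\F_p\setminus D$ amounts, via a $(k{+}1)\times(k{+}1)$ determinant expansion, to choosing the new value of $u$ at $\beta$ outside the ``forbidden set''
\[
F_\beta\;=\;\bigl\{\,p_{S'}(\beta)\;:\;S'\subseteq D,\;|S'|=k\,\bigr\},
\]
where $p_{S'}$ is the Lagrange interpolation of $u$ on $S'$. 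Adding $\infty$, with the convention that $x^i$ vanishes there for $i<k$, is automatic: any nonzero value on the $u$-row gives an MDS extension.

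The crux is the dichotomy: \emph{either} $|F_\beta|<p$ (so extension at $\beta$ succeeds) \emph{or} $u$ coincides on $D$ with $\frac{1}{x-\beta}$ up to polynomials of degree $<k$. Lagrange's formula exhibits $p_{S'}(\beta)$ as a weighted linear combination of the values $u(\alpha)$, $\alpha\in S'$; varying $S'$ over the $k$-subsets of $D$, $F_\beta$ becomes (after a multiplicative renormalization) a restricted sum-set of the kind controlled by the Dias da Silva--Hamidoune theorem resolving the Erd\"os--Heilbronn conjecture, giving $|F_\beta|\geq\min(p,\,k(n-k)+1)$. The hypothesis $k\geq(p-1)/2$ is exactly what makes $k(n-k)+1\geq p$ effective as soon as $n\geq k+2$; the remaining case $n=k+1$ is immediate since then $u$ is uniquely determined by its Lagrange interpolation of degree exactly $k$. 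Hence $|F_\beta|=p$ can happen only in the equality regime of Erd\"os--Heilbronn, and a rigidity analysis matches that extremal additive configuration with the rational function $u(x)=\frac{1}{x-\beta}$ on $D$ (modulo degree $<k$).

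Iterating the extension, either some $\delta\in\F_p\setminus D$ blocks extension---in which case the dichotomy delivers the second family $u\equiv\frac{1}{x-\delta}$---or we obtain an $[p+1,k+1]$ MDS code $\tilde C$ on $\F_p\cup\{\infty\}$. Since $k+1\leq p$ (because $k\leq n-1\leq p-1$), Theorem~\ref{Simeon} forces the column set of $\tilde C$ to be projectively equivalent to $\{(1,\alpha,\ldots,\alpha^k):\alpha\in\F_p\}\cup\{(0,\ldots,0,1)\}$. Tracing the induced basis change back to the row representing $u$ in our generator then pins $u(x)$, up to scaling and addition of a polynomial of degree $<k$, to $x^k$---the first family. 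The main obstacle is the dichotomy step: controlling $F_\beta$ via Erd\"os--Heilbronn sharply enough to identify its saturation case, and then recognising the extremal additive configuration as the specific rational deep hole $\frac{1}{x-\beta}$. The ``deep hole tree'' of the abstract is presumably the combinatorial device organising this inductive extension across $(\F_p\setminus D)\cup\{\infty\}$ and marking the coordinates where extension blocks.
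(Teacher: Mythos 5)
Your reduction of the deep-hole condition to a one-point MDS-extension problem, the description of the forbidden set $F_\beta=\{p_{S'}(\beta):S'\subseteq D,\ |S'|=k\}$, and the endgame via Theorem~\ref{Simeon} once a $[p+1,k+1]$ MDS code is reached are all sound in outline. The gap is your ``crux'' dichotomy, which is asserted rather than proved, and the tool you invoke cannot prove it. For a general word $u$, Lagrange's formula gives $p_{S'}(\beta)=\sum_{i\in S'}u(\beta_i)\prod_{j\in S',\,j\neq i}\frac{\beta-\beta_j}{\beta_i-\beta_j}$: the weight attached to $u(\beta_i)$ depends on the whole subset $S'$, and the natural renormalizing factor $\prod_{j\in S'}(\beta-\beta_j)$ itself varies with $S'$, so $F_\beta$ is \emph{not} a restricted sumset of any fixed subset of $\F_p$ and Theorem~\ref{SH} gives no lower bound on it. It collapses to a restricted sumset only for the special words you are trying to characterize (e.g.\ for $u=\frac{1}{x-\beta}$ one finds $p_{S'}(\beta)=-\sum_{i\in S'}\frac{1}{\beta-\beta_i}$, and for $u=x^k$ one gets $\beta^k$ minus a restricted product). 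Moreover, even granted a sumset bound, identifying the saturation case $F_\beta=\F_p$ with the single word $\frac{1}{x-\beta}$ is an equality/rigidity statement of Vosper type for restricted sumsets that you never supply; for an arbitrary deep hole $u$ this dichotomy is essentially the theorem itself (it amounts to classifying the non-extendable $(k+1)$-dimensional MDS codes arising here), so the argument is circular at its key step.

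The paper avoids exactly this by running the induction in the opposite direction. It orders the points, starts from an evaluation set of size $k+1$, and grows it one point at a time (the deep hole tree): the inductive hypothesis classifies the restriction of any deep hole to $D_d$ as equivalent to $x^k$ or $\frac{1}{x-\delta}$, so at depth $d+1$ one only has to decide which values $\gamma$ at the new point keep the bordered matrix MDS for these \emph{specific} words. There the forbidden sets genuinely are restricted sums or restricted products, and Theorem~\ref{SH} together with Corollary~\ref{2product} shows they exhaust $\F_p$ (killing the branch of $\frac{1}{x-\delta}$ when $\delta$ enters the set, Lemma~\ref{case_1}) or exhaust $\F_p$ minus the one expected value (Lemmas~\ref{case_2} and~\ref{case_3}), with a separate explicit determinant count settling the base case $|D|=k+2$ (Lemma~\ref{basis}); Ball's theorem is not needed for Theorem~\ref{mainthm} at all. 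To salvage your outward-extension scheme you would need an independent proof of the dichotomy for arbitrary deep holes, which is precisely the content you cannot borrow from Erd\"os--Heilbronn.
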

Our techniques  are built on the idea of deep hole trees, and
several results concerning  the Erd\"os-Heilbronn conjecture. 
We also show the following theorem based on some results of finite geometry.
\begin{theorem}\label{geometrythm}
Given a finite filed $\F_q$ with characteristic $p>2$, we have
\begin{itemize}
\item If $k+1\leqslant p$ or $3\leqslant q-p+1\leqslant k+1\leqslant q-2$, then Conjecture \ref{conj1} is true.
\item If $3\leqslant k < \dfrac{\sqrt{q}+1}{4}$, then Conjecture \ref{conj2} is true.
\item If $3\leqslant k < \dfrac{p}{45}$, where $q=p$ is prime, then Conjecture \ref{conj2} is true.
\end{itemize}
\end{theorem}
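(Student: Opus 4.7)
The approach is to reduce the classification of deep holes in each case to identifying when certain MDS codes are GRS, then apply the cited finite-geometry theorems. The core reduction is: for $u(x)\notin RS_q(D,k)$, let $C_u$ denote the $[n,k+1]$ code whose generator matrix is obtained by appending the row $(u(\alpha_1),\ldots,u(\alpha_n))$ to the standard generator of $RS_q(D,k)$. A short calculation gives $d(C_u)=\min\{n-k+1,\,d(u,RS_q(D,k))\}$, so $C_u$ is MDS if and only if $u$ is a deep hole.

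For Part 1, set $n=q$ and further append the column $e_{k+1}=(0,\ldots,0,1)^T$ to the generator matrix of $C_u$. Since $e_{k+1}$ is linearly independent from any $k$ of the Vandermonde-like columns, the resulting $[q+1,k+1]$ code is MDS iff $C_u$ is, and its columns form a $(q+1)$-arc in $\mathrm{PG}(k,\F_q)$. Under the hypothesis $k+1\leq p$ or $3\leq q-p+1\leq k+1\leq q-2$, Theorem~\ref{Simeon} (Ball, applied with dimension parameter $k+1$) identifies this arc projectively with the standard arc $\{(1,\alpha,\ldots,\alpha^k):\alpha\in\F_q\}\cup\{e_{k+1}\}$. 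Exploiting the 3-transitivity of the standard arc's automorphism group $\mathrm{PGL}_2(\F_q)$, one may normalize the projective equivalence $A$ to fix $e_{k+1}$ and reparametrize $\F_q$ by the identity; matching coordinates then forces $A$'s upper-left $k\times k$ block to act as a scalar on every Vandermonde column $(1,\alpha,\ldots,\alpha^{k-1})^T$, which yields $u(x)=ax^k+g(x)$ with $a\neq 0$ and $\deg g<k$, confirming Conjecture~\ref{conj1}.

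For Parts 2 and 3, set $n=q-1$, so $C_u$ is a $[q-1,k+1]$ MDS code. Segre's Theorem~\ref{B.Segre} gives $N_{min}(3,q)\leq q-(\sqrt{q}-7)/4$, while Voloch's Theorem~\ref{Voloch} gives $N_{min}(3,p)\leq p-p/45+1$ when $q=p$. Iterating Lemma~\ref{RL} bounds $N_{min}(k+1,q)\leq N_{min}(3,q)+(k-2)$, which under either hypothesis on $k$ drops strictly below $q-1$; thus $C_u$ is GRS. Equivalently, there exist an injection $\gamma:\F_q^*\to\F_q$ and multipliers $v(\alpha)\in\F_q^*$ so that the function space $\mathrm{span}\{1,\alpha,\ldots,\alpha^{k-1},u(\alpha)\}$ on $\F_q^*$ equals $\mathrm{span}\{v(\alpha)\gamma(\alpha)^j:0\leq j\leq k\}$. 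Writing $\alpha=Q_1(\gamma(\alpha))/P(\gamma(\alpha))$ for polynomials $P,Q_1$ of degree $\leq k$ (using $1,\alpha\in$ this span), the inverse of $\gamma$ becomes a rational map of degree $\leq k$ on $\mathbb{P}^1(\F_q)$; but a degree-$d$ rational map is generically $d$-to-$1$, so a fiber-counting argument with only $|\mathbb{P}^1(\F_q)\setminus\gamma(\F_q^*)|=2$ ``extra'' preimages available to absorb duplicates forces $d=1$. Thus $\gamma$ extends to a M\"obius transformation on $\F_q\cup\{\infty\}$, and since $|\gamma(\F_q^*)|=q-1$, this extension sends $\{0,\infty\}$ bijectively onto $\{m,\infty\}$ for the missing point $m\in\F_q$, giving the two structural cases $\gamma(\alpha)=a\alpha+m$ and $\gamma(\alpha)=m+b/\alpha$. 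Analyzing the admissible polynomials $P$ in each case under the constraint that $P$ is nonvanishing on $\gamma(\F_q^*)=\F_q\setminus\{m\}$, and using $\alpha^{-1}=\alpha^{q-2}$ on $\F_q^*$, one finds the span equals either $\mathrm{span}\{1,\alpha,\ldots,\alpha^k\}$ (yielding $u(x)=ax^k+g(x)$) or $\mathrm{span}\{1,\alpha,\ldots,\alpha^{k-1},\alpha^{q-2}\}$ (yielding $u(x)=bx^{q-2}+g(x)$), confirming Conjecture~\ref{conj2}.

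The main obstacle is the final step in Parts 2 and 3: extracting the explicit polynomial form of $u$ from the abstract GRS structure. One must carefully justify that $\gamma$ is genuinely M\"obius (not a higher-degree rational map) via the injectivity-versus-fiber-size obstruction, and then verify that the nonvanishing of the GRS denominator $P(\gamma(\alpha))$ on $\F_q^*$ rigidly restricts the admissible spans to exactly the two forms predicted by the conjecture. All other steps are either routine reductions or direct invocations of the cited theorems.
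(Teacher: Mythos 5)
Your overall architecture is the same as the paper's: Part 1 reduces via Proposition \ref{prop} to a $(q+1)$-arc and invokes Ball's theorem, and Parts 2--3 combine Segre's/Voloch's bounds with Lemma \ref{RL} to force the extended $[q-1,k+1]$ code to be GRS. The genuine gap is in your final step for Parts 2 and 3. From $\alpha=Q_1(\gamma(\alpha))/P(\gamma(\alpha))$ you conclude that the rational map $R=Q_1/P$ of degree $d\leqslant k$ must have $d=1$ because it is injective on the $q-1$ points $\gamma(\F_q^*)$ and is ``generically $d$-to-one.'' That inference is false: generic $d$-to-one-ness counts geometric fibers over $\overline{\F}_q$, and the missing preimages need not be $\F_q$-rational, so counting rational points gives no contradiction. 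Concretely, when $\gcd(3,q-1)=1$ the map $x\mapsto x^3$ (degree $3\leqslant k$, which your hypothesis allows) is injective on all of $\F_q$; more generally any permutation polynomial of degree at most $k$ defeats the fiber count. Hence injectivity of $R$ on $q-1$ rational points alone does not make $\gamma$ a M\"obius transformation; to exclude such maps you must exploit the full equality of the two function spaces (equivalently a uniqueness statement for the GRS representation of $C_u$, or a dual-code argument) --- which is precisely the step the paper compresses into the assertion that $G'$ is ``equivalent to a Vandermonde matrix'' and hence $u$ has one of the two stated forms. As written, your deduction of Conjecture \ref{conj2} from GRS-ness is not a proof.

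Two smaller points. In Part 1, 3-transitivity of $\mathrm{PGL}_2$ on the standard arc lets you prescribe only three points: you may force the equivalence to fix $e_{k+1}$ and two parameter values, but you cannot ``reparametrize $\F_q$ by the identity''; you still need the classical fact that a collineation carrying the rational normal curve (minus a point) into itself is induced by a fractional-linear change of parameter, or a degree count on the coordinates of $A\cdot(1,\alpha,\ldots,\alpha^{k-1},u(\alpha))^T$, before your eigenvector/scalar-block conclusion applies. Also, your claim that $N_{min}(k+1,q)\leqslant N_{min}(3,q)+k-2$ ``drops strictly below $q-1$'' under $k<\frac{\sqrt q+1}{4}$ (resp. $k<\frac{p}{45}$) only yields $N_{min}(k+1,q)<q$, which covers lengths greater than $q-1$ rather than length exactly $q-1$; the constants need tightening by one or two units (the paper's own chain, which estimates $N_{min}(k,q)$ but applies it to a code of dimension $k+1$, has the same slack).
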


This paper is organized as follows: Section 2 presents some preliminaries; 
Section 3 describes the idea of the deep hole tree; Section 4 demonstrates 
the proof of Theorem \ref{mainthm}; Section 5 gives the proof of Theorem 
\ref{geometrythm}.

\section{Preliminaries}
\subsection{A criterion for deep holes of linear MDS codes}
By definition, deep holes of a linear code are words that has a maximum distance to the code.
In the case of linear MDS codes, there is another way to characterize the deep hole as
follows, which connects the concept of deep holes with the MDS codes.
The following is well known:
\begin{proposition}\label{prop}
Let $\F_q$ be a finite field with characteristic $p$. Suppose $G$ is
a generator matrix for a linear MDS code $C=[n,k]_q$ with covering radius $\rho=n-k$, then $u\in \F_q^n$ is a deep hole of $C$ if and only if
$$
G'=\left[
\begin{array}{c}
 G\\ \hline
 u
\end{array}\right]
$$
generates another linear MDS code.
\end{proposition}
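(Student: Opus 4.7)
The plan is to analyze the minimum distance of the code $C'$ generated by $G'$ by decomposing its codewords according to whether they involve the new row $u$. First I would note that both sides of the equivalence force $u \notin C$: if $u$ lies in the row span of $G$, then $d(u,C)=0$, so $u$ is not a deep hole, and simultaneously $G'$ has rank $k$ rather than $k+1$, hence cannot generate an $[n,k+1]$ MDS code. So I may assume $u\notin C$, in which case $C'=\langle G'\rangle$ has dimension exactly $k+1$.

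Next I would write every codeword of $C'$ as $c' = \lambda u + w$ with $w\in C$ and $\lambda\in\F_q$, and split into two cases. If $\lambda=0$, then $c'\in C$, so its weight is at least $n-k+1$ because $C$ is MDS. If $\lambda\neq 0$, scalar multiplication preserves Hamming weight, so $\mathrm{wt}(c') = \mathrm{wt}(u - w')$ where $w' = -\lambda^{-1}w$ ranges over all of $C$ as $w$ does; thus $\mathrm{wt}(c') = d(u,w')$ and the minimum over this case is exactly $d(u,C)$. Combining the two cases,
\[
d(C') \;=\; \min\bigl(n-k+1,\; d(u,C)\bigr).
\]

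To conclude, I would appeal to the Singleton bound, which gives $d(C')\leqslant n-(k+1)+1 = n-k$, so $C'$ is MDS if and only if $d(C') = n-k$. Since the covering radius of $C$ is $n-k$, we have $d(u,C)\leqslant n-k$ for every $u\in\F_q^n$, with equality exactly when $u$ is a deep hole. Therefore $\min(n-k+1,\, d(u,C)) = n-k$ if and only if $d(u,C)=n-k$, which is the deep hole condition, giving both directions of the equivalence.

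This is essentially a weight-counting argument rather than one with a real obstacle. The one observation that makes it work cleanly is the scaling trick that turns the weight of $\lambda u + w$ (with $\lambda\neq 0$) into the distance from $u$ to a codeword of $C$; the rest is the Singleton bound together with the definition of covering radius for an MDS code.
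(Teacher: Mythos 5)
Your proof is correct, but it takes a different route from the paper's. The paper argues at the level of columns and coordinate agreements: for the forward direction it assumes some $k+1$ columns of $G'$ are linearly dependent, expresses the restriction of $u$ to those coordinates as a combination of the corresponding rows of $G$, and produces a codeword $v\in C$ agreeing with $u$ on $k+1$ positions, so $d(u,v)\leqslant n-k-1<\rho$; the converse reverses this, turning $d(u,C)<n-k$ into $k+1$ dependent columns. You instead compute the minimum distance of the extended code directly: writing each codeword of $C'$ as $\lambda u+w$ and using the scaling trick for $\lambda\neq 0$, you get $d(C')=\min\bigl(d(C),\,d(u,C)\bigr)=\min\bigl(n-k+1,\,d(u,C)\bigr)$, and then the Singleton bound plus the covering-radius bound $d(u,C)\leqslant n-k$ finishes it. Your decomposition is arguably cleaner and more general (the identity $d(C')=\min(d(C),d(u,C))$ for $u\notin C$ needs nothing about MDS codes), and you are more careful than the paper in noting explicitly that both sides force $u\notin C$ so that $G'$ has rank $k+1$. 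What the paper's column-independence formulation buys is a statement in exactly the form used later: the rest of the paper tests deep-hole candidates by checking nonsingularity of $(k+1)\times(k+1)$ submatrices of $G'$, which falls straight out of their proof, whereas from your argument one would still invoke the standard equivalence between the MDS property and independence of every $k+1$ columns of a generator matrix.
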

A proof is included in Appendix A for the sake of completeness.
\subsection{Some additive combinatorics results}
In this section, we introduce some additive combinatorics results that we will use later.
The first theorem is about the estimation of the size of restricted sum sets, which is first proved by Dias da Silva and Hamidoune \cite{SH94}. Then Alon et al. \cite{ANR96}
gave a simple proof using the polynomial method.
\begin{theorem}\cite{SH94,ANR96}\label{SH}
Let $\F$ be a field with characteristic $p$ and $n$ be a positive integer. Then for any finite subset $S\subset \F$ we have
$$ |n ^\wedge  S |\geqslant \min\{p, n|S|-n^2+1\},$$
where $n ^\wedge S$ denotes the set of all sums of $n$ distinct elements of $S$.
\end{theorem}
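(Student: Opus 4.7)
The plan is to apply the polynomial method via Alon's Combinatorial Nullstellensatz, following the approach of Alon, Nathanson, and Ruzsa. Let $m = |S|$. I would dispatch the trivial case $m \leqslant n$ separately, then assume $m > n$ and suppose for contradiction that $|n ^\wedge S| \leqslant \min\{p, nm - n^2 + 1\} - 1$. This simultaneously yields $|n ^\wedge S| \leqslant nm - n^2$ and $nm - n^2 < p$, the latter being the crucial characteristic bound. I would then enlarge $n ^\wedge S$ to a set $E \subseteq \F$ with $|E| = nm - n^2$ (passing to an extension of $\F$ if necessary) and introduce the auxiliary polynomial
$$ f(x_1, \ldots, x_n) \;=\; \prod_{e \in E}\bigl(x_1 + x_2 + \cdots + x_n - e\bigr) \cdot \prod_{1 \leqslant i < j \leqslant n}(x_j - x_i). $$
The first key observation is that $f$ vanishes on every point of $S^n$: on a tuple with repeated coordinates the Vandermonde factor kills $f$, while on a tuple of pairwise distinct coordinates the coordinate sum lies in $n ^\wedge S \subseteq E$, killing the product of linear forms.

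Next I would target the monomial $M = x_1^{m-1} x_2^{m-2} \cdots x_n^{m-n}$: its total degree $nm - \binom{n+1}{2}$ matches $\deg f = |E| + \binom{n}{2}$, and each individual exponent $m - i$ is at most $m - 1 < |S|$. This makes $M$ a valid target for Alon's Combinatorial Nullstellensatz with $A_1 = \cdots = A_n = S$, provided its coefficient in $f$ is nonzero in $\F$. Because $M$ is of top degree, this coefficient depends only on the leading homogeneous parts of the two factors, and pairing the multinomial expansion of $(x_1 + \cdots + x_n)^{|E|}$ with the Leibniz expansion $\sum_\sigma \mathrm{sgn}(\sigma) \prod_i x_i^{\sigma(i) - 1}$ of the Vandermonde presents the coefficient of $M$ in $f$ as
$$ |E|! \, \det\!\left(\frac{1}{(m - i - j + 1)!}\right)_{i, j = 1}^{n}. $$
A short column reduction should collapse this determinant to a Vandermonde in $m - 1, m - 2, \ldots, m - n$, ultimately yielding a closed-form expression of the shape $(-1)^{\binom{n}{2}} |E|! \prod_{k=0}^{n-1} k! \big/ \prod_{i=1}^n (m-i)!$. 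Every factorial appearing here has argument strictly less than $p$ under the hypothesis $|E| < p$, so this coefficient is invertible, hence nonzero, in $\F$. Alon's theorem then produces a point of $S^n$ at which $f$ fails to vanish, contradicting the earlier observation.

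The hard part will be the closed-form evaluation of this determinant together with the verification that its value survives reduction mod $p$: the rest of the argument (introducing $f$, checking vanishing on $S^n$, matching degrees with the target monomial) is natural once the framework is in place. The Vandermonde collapse is somewhat delicate, and the whole strategy hinges on the bound $|n ^\wedge S| < p$, which is precisely what keeps every factorial in the denominator invertible in $\F$. The cap at $p$ in the statement reflects exactly where this method stops delivering nontrivial information.
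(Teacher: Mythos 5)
The paper itself quotes this theorem from \cite{SH94,ANR96} without proof, so your sketch has to stand on its own; it is essentially the Alon--Nathanson--Ruzsa polynomial-method argument, and it is sound in the regime $nm-n^2+1\leqslant p$. The gap is at the cap. From the contradiction hypothesis $|n^\wedge S|\leqslant\min\{p,\,nm-n^2+1\}-1$ you may conclude $|n^\wedge S|\leqslant p-1$ and $|n^\wedge S|\leqslant nm-n^2$, but \emph{not} $nm-n^2<p$: when $nm-n^2+1>p$ the minimum equals $p$ and the hypothesis says nothing about $nm-n^2$ relative to $p$. In that case your auxiliary set $E$ must have size $nm-n^2\geqslant p$ (the degree match forces $|E|=nm-n^2$ exactly), the factor $|E|!$ -- and possibly the denominator factorials, since $m$ may exceed $p$ when $\F\neq\F_p$ -- is divisible by $p$, and the invertibility claim that feeds the Combinatorial Nullstellensatz collapses. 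Note also that this capped regime is precisely the one the paper actually uses: in Lemma \ref{case_1} one needs $|k^\wedge S_1|=p$.

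This is not patched by simply shrinking $S$: replacing $S$ by $S'\subseteq S$ with $n|S'|-n^2+1\leqslant p$ changes the bound in jumps of $n$, so that route only yields $|n^\wedge S|\geqslant p-n+1$. The standard fix (and the route of \cite{ANR96}) is to prove the refined statement for restricted sums of $n$ sets $A_1,\dots,A_n$ of pairwise distinct cardinalities, with bound $\min\{p,\ \sum_i|A_i|-\binom{n+1}{2}+1\}$, and then apply it to nested subsets $A_i\subseteq S$ whose sizes can be tuned one unit at a time so the bound hits $p$ exactly; alternatively, the exterior-algebra argument of Dias da Silva--Hamidoune handles the cap directly. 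Aside from this missing case, the rest of your outline is correct: the vanishing of $f$ on $S^n$, the degree and exponent checks for the monomial $x_1^{m-1}\cdots x_n^{m-n}$, the closed form $\pm\,|E|!\,\prod_{d=0}^{n-1}d!\,/\prod_{i=1}^{n}(m-i)!$ for the coefficient, and the fact that $|E|=n(m-n)<p$ together with $m>n$ forces $m\leqslant p$, so every factorial there has argument below $p$ (this last deduction deserves a line of justification rather than being taken as obvious).
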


Brakemeier \cite{B78} and Gallardo et al. \cite{GGP99} established the following theorem:
\begin{theorem}\cite{B78,GGP99}\label{2sum}
Let $n$ be a positive integer and $S\subset \Z/n\Z$. If $|S|>\frac{n}{2}+1$, then
\[
2^\wedge S = \Z/n\Z,
\]
where $2^ \wedge S$ denotes the set of all sums of $2$ distinct elements of $S$.
\end{theorem}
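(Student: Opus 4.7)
The plan is to fix an arbitrary target $c\in \Z/n\Z$ and produce two distinct elements $a,b\in S$ with $a+b=c$. The natural tool is inclusion--exclusion applied to $S$ and its reflection $c-S:=\{c-s : s\in S\}$ inside the ambient group $\Z/n\Z$, followed by a short distinctness argument that consumes the extra slack built into the hypothesis $|S|>n/2+1$.

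First I would note that translation is a bijection, so $|c-S|=|S|>n/2+1$. Since both sets sit in a group of size $n$, a direct inclusion--exclusion gives
\[
|S\cap(c-S)| \;\geq\; |S|+|c-S|-n \;>\; 2\bigl(\tfrac{n}{2}+1\bigr)-n \;=\; 2,
\]
so $|S\cap(c-S)|\geq 3$. For every $a$ in this intersection, the element $b:=c-a$ lies in $S$ and satisfies $a+b=c$, which already shows $c$ is a sum of two (not necessarily distinct) elements of $S$.

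The remaining task is to force $a\neq b$. The equality $a=b$ is equivalent to $2a=c$ in $\Z/n\Z$, a linear equation with at most $\gcd(2,n)\leq 2$ solutions. Since $S\cap(c-S)$ contains at least three elements, at least one of them avoids the set of solutions to $2a=c$, giving a genuine representation $c=a+b$ with $a\neq b$ both in $S$. Because $c$ was arbitrary, $2^\wedge S=\Z/n\Z$.

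The only real obstacle is the distinctness step, and it is precisely what forces the extra ``$+1$'' in the hypothesis: a cardinality of $|S|>n/2$ alone would guarantee $S\cap(c-S)\neq\emptyset$ and hence a representation $c=a+b$, but not that one can discard the (up to two) diagonal solutions of $2a=c$. Sharpening the inclusion--exclusion bound to $|S\cap(c-S)|\geq 3$ is exactly what absorbs this loss, so once the counting inequality is written correctly the rest is automatic.
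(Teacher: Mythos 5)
Your argument is correct and complete: for a fixed target $c$, the inclusion--exclusion bound $|S\cap(c-S)|\geq 2|S|-n>2$ gives at least three elements $a$ with $c-a\in S$, and since $2a=c$ has at most $\gcd(2,n)\leq 2$ solutions in $\Z/n\Z$, one such $a$ satisfies $a\neq c-a$, producing the required representation of $c$ as a sum of two distinct elements of $S$. Note, however, that the paper does not prove this statement at all: it is imported as a known theorem of Brakemeier and of Gallardo et al.\ (\cite{B78,GGP99}) and used only to derive Corollary~\ref{2product}, so there is no in-paper proof to compare against. Your pigeonhole-plus-diagonal-count argument is the standard elementary proof of the stated (slightly non-sharp) form of the result, and your closing remark correctly identifies the role of the extra ``$+1$'' in the hypothesis: $|S|>n/2$ alone yields only a representation with possibly equal summands, while the stronger bound absorbs the at most two solutions of $2a=c$.
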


Hence we have the following corollary:
\begin{corollary}\label{2product}
Let $\F_p$ be a prime finite field, $S\subset \F_p^*$. If $|S|>\frac{p+1}{2}$, then each element of $\F_p^*$ is the product of two distinct
elements of $S$.
\end{corollary}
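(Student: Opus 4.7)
The plan is to transport the multiplicative structure on $\F_p^*$ to the additive group $\Z/(p-1)\Z$ via the discrete logarithm, and then invoke Theorem \ref{2sum}. Since $\F_p^*$ is cyclic of order $p-1$, I would fix a generator $g$ and consider the group isomorphism $\log_g : \F_p^* \to \Z/(p-1)\Z$. Setting $T = \log_g(S) \subset \Z/(p-1)\Z$, the bijectivity of $\log_g$ gives $|T| = |S|$, and two elements of $S$ are distinct if and only if their images in $T$ are distinct; moreover, a product $s_1 s_2$ in $\F_p^*$ corresponds exactly to the sum $\log_g s_1 + \log_g s_2$ in $\Z/(p-1)\Z$.

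The heart of the reduction is a numerical check. Taking $n = p-1$, the hypothesis $|T| > \frac{n}{2} + 1$ required by Theorem \ref{2sum} becomes $|T| > \frac{p-1}{2}+1 = \frac{p+1}{2}$, which matches the assumption $|S| > \frac{p+1}{2}$ of the corollary exactly. Theorem \ref{2sum} then yields $2^\wedge T = \Z/(p-1)\Z$, and translating back through $\log_g^{-1}$ shows that every element of $\F_p^*$ can be written as $s_1 s_2$ for distinct $s_1, s_2 \in S$, which is the desired conclusion.

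There is no real obstacle here, since this corollary is essentially a direct transcription of Theorem \ref{2sum} through the discrete logarithm. The only points worth double-checking are that the multiplicative threshold $\frac{p+1}{2}$ lines up with the additive threshold $\frac{n}{2}+1$ (which it does, with $n=p-1$), and that the distinctness of the two factors is preserved by the bijection (which is immediate since $\log_g$ is injective).
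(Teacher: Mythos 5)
Your proof is correct and follows essentially the same route as the paper: both transfer $S$ to $\Z/(p-1)\Z$ via a discrete logarithm with respect to a fixed generator and then apply Theorem \ref{2sum} with $n=p-1$, noting that $\frac{n}{2}+1=\frac{p+1}{2}$ matches the stated hypothesis. No gaps.
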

\begin{proof}
Let $g$ be a generator of $\F_p^*$. Let
\[
S'=\{e|g^e\in S\}\subset \Z/(p-1)\Z.
\]
For any given element $\alpha=g^a\in\F_p^*$, we need to show that there exist two distinct
elements $b\neq c$ such that
\[
g^a=g^bg^c,
\]
where $b,c\in S'$.
This is equivalent to
\[
a = b + c,
\]
which follows from Theorem \ref{2sum}.
\end{proof}

\section{Construction of the deep hole tree}
Let $\F_q = \{ \alpha_1, \alpha_2, \cdots, \alpha_{q}=0\} $.
The polynomials in $\F_q [x]$ of degree less than $q$ form a $\F_q$-linear
space,
with a basis
\[
\{1,x,\ldots,x^{k-1},\prod_{i=1}^{k}(x-\alpha_i),\ldots,\prod_{i=1}^{q-1}(x-\alpha_i)\}. 
\]
Given a polynomial $f(x)\in \F_q[x]$ with degree $q-1$
we have
\[
f(x)=l(x)+c_1\prod_{i=1}^{k}(x-\alpha_i)+\cdots+c_{q-k}\prod_{i=1}^{q-1}(x-\alpha_i),
\]
where $l(x)$ is of degree less than $k$, we want to determine when $f(x)$ generates a deep hole. By Proposition \ref{prop}, $f(x)$ generates a deep hole of $RS_q(\F_q,k)$ if and only if
$$
G'=\left[
\begin{array}{c}
 G\\ \hline
 u
\end{array}\right]
$$
generates an MDS code, where $G$ is the generator matrix of $RS_q(\F_q,k)$, and $u=(f(\alpha_1),\ldots,f(\alpha_q))$.

Observe that the function, which generates a deep hole for $RS_q(D_2,k)$, also generates a deep hole for $RS_q(D_1,k)$ if $D_1\subset D_2$.
Instead of considering the deep holes for $RS_q(\F_q,k)$ at the first step, we propose to consider a smaller evaluation set at the beginning and make it increase gradually.
The advantage of doing so is that we can prune as we increase the evaluation set instead of exhausting search $(c_1,\ldots,c_{q-k})$. To be more precise,
firstly we determine $c_1$ over $D_1=\{\alpha_1,\ldots,\alpha_{k+1}\}$, then we determine $c_2$ over $D_2=\{\alpha_1,\ldots,\alpha_{k+2}\}$  based on the knowledge of $c_1$, so on and so forth. We present the result as a tree, which we will call a \textit{deep hole tree} in the sequel. 

\begin{remark} \label{expdeep}
Wu and Hong \cite{WHarx12} showed that if $D=\F_q\setminus\{\beta_1,\ldots,\beta_l\}$ then $f_{\beta_i}(x)=\frac{1}{x-\beta_i}$ generates a deep hole for $RS_q(D,k)$, where $1\leqslant i \leqslant l$. We can also deduce this from Proposition \ref{prop}. For convenience, we will call these deep holes and deep holes generated by a function of degree $k$ \textit{expected deep holes}.
\end{remark}

Motivated by Remark \ref{expdeep}, firstly we construct the \textit{expected deep hole tree} as follows:
\begin{itemize}
\item The root node is $1$ without loss of generality, i.e., $c_1=1$.
\item There are $p-k-1$ branches of the tree, each with distinct length in $[2,p-k]$. And we designate the sequence of nodes in a branch with length $l$ as $b_l$.
    \begin{itemize}
    \item If $l=p-k$, then $b_{p-k}=\{0,\ldots,0\}$.
    \item If $2\leqslant l \leqslant p-k-1$, then $b_l=(c_1,\ldots,c_l)$, where $f=\frac{1}{x-\alpha_{l+1}}$ is equivalent to $c_1\prod_{i=1}^{k}(x-\alpha_i)+\cdots+c_{l}\prod_{i=1}^{k+l-1}(x-\alpha_{i})$.
    \end{itemize}
\end{itemize}

\begin{proposition}\label{exptree}
The expected deep hole tree is a part of the full deep hole tree.
\end{proposition}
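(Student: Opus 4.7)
The plan is to verify that each of the branches listed in the expected deep hole tree---namely the length-$(p-k)$ branch $b_{p-k}$ coming from $x^k$ and the shorter branches $b_l$ coming from $f_\delta(x) = 1/(x-\delta)$ for the appropriate $\delta \in \F_p$ outside the current evaluation set---corresponds to a valid path in the full deep hole tree. Concretely, given the coefficient sequence $(c_1,\ldots,c_l)$ associated to a branch, one must check that for every prefix length $l' \leqslant l$ the partial polynomial $\sum_{j=1}^{l'} c_j \prod_{i=1}^{k+j-1}(x-\alpha_i)$ still generates a deep hole of $RS_p(D_{l'},k)$.

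The $x^k$ case is essentially immediate. In the given basis, $x^k \equiv \prod_{i=1}^{k}(x-\alpha_i)$ modulo polynomials of degree less than $k$, so after normalization we have $c_1=1$ and $c_j=0$ for $j\geqslant 2$. At every level $l'$ the resulting polynomial is $\prod_{i=1}^{k}(x-\alpha_i)$, which has degree exactly $k$ and hence generates a deep hole of $RS_p(D_{l'},k)$ by the degree-$k$ fact from the introduction (attributed to \cite{lw08}).

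For a branch coming from $f_\delta$, let $g_{l'}$ denote the unique polynomial of degree less than $k+l'$ agreeing with $f_\delta$ on $D_{l'}$, and write its basis expansion as $g_{l'} = h_{l'}(x) + \sum_{j=1}^{l'} c_{j,l'} \prod_{i=1}^{k+j-1}(x-\alpha_i)$ with $\deg h_{l'} < k$. The crucial ingredient is a consistency lemma: the difference $g_{l'+1} - g_{l'}$ vanishes on $D_{l'}$ and has degree at most $k+l'$, so it must be a scalar multiple of $\prod_{i=1}^{k+l'}(x-\alpha_i)$. This forces $c_{j,l'+1} = c_{j,l'}$ for all $j \leqslant l'$, so the coefficients stabilize into an unambiguous sequence $(c_1,c_2,\ldots)$ truncated at the first level $l+1$ where $\delta$ enters the evaluation set. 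Because $\delta \notin D_{l'}$ for each $l' \leqslant l$, Remark \ref{expdeep} guarantees that $f_\delta$ is a deep hole of $RS_p(D_{l'},k)$; the polynomial $g_{l'}$ produces exactly the same vector on $D_{l'}$, so each prefix $(c_1,\ldots,c_{l'})$ is a valid node of the full tree.

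The main obstacle is essentially just bookkeeping: carefully setting up the consistency lemma so that the coefficient sequences attached to $f_\delta$ at different evaluation-set sizes actually coincide on their common indices, and matching the normalization $c_1 = 1$ at the root with the scaling freedom in the equivalence relation. Once that is in place, the proposition reduces to combining the degree-$k$ fact (for the $x^k$ branch) with Remark \ref{expdeep} (for each $f_\delta$ branch).
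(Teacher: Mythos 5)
Your proposal is correct and follows essentially the same route as the paper: the paper's proof is the one-line observation that the claim follows from Remark \ref{expdeep} (together with the degree-$k$ fact and the observation that a deep hole for a larger evaluation set remains one for a smaller set), which is exactly what you invoke. Your additional consistency lemma about the stabilization of the coefficients $c_{j,l'}$ just makes explicit the bookkeeping that the paper leaves implicit in its definition of the branches $b_l$.
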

\begin{proof}
This follows from Remark \ref{expdeep}.
\end{proof}

Now we can construct the full deep hole tree based on the expected deep hole tree.
\begin{itemize}
\item The root node is $1$ without loss of generality, i.e., $c_1=1$.
\item The children $\{c_{i+1}\}$ of a node $c_i,1\leqslant i \leqslant q-k-1$ are defined as follows: given the ancestors $(c_1,\ldots,c_i)$, for $\gamma\in GF(q)$, if $\gamma$ is the
child of $c_i$ in the expected deep hole tree, then keep it; otherwise, if
\[
c_1\prod_{i=1}^{k}(x-\alpha_i)+\cdots+c_{i}\prod_{i=1}^{k+i-1}(x-\alpha_i)+\gamma\prod_{i=1}^{k+i}(x-\alpha_i)
\]
satisfies the property of the function which generates a deep hole as in Proposition \ref{prop}, then $\gamma$ is a child of $c_i$.
\end{itemize}
That is, we keep the nodes of the expected deep hole tree and add additional ones if necessary. Now we illustrate the procedure to construct the deep hole tree by one example.

\textbf{Example 1.} Let $p=7,k=2$. The evaluation set is ordered such that $\alpha_i=i,1\leqslant i \leqslant 7$.

(1) The expected deep hole tree is as follows:

\begin{tikzpicture}[level/.style={sibling distance=30mm/#1},scale=1.0, every node/.style={scale=0.8}]
\node [circle,draw] (z){$1$}
  child {node [circle,draw] (j) {$1$}
     }
  child {node [circle,draw] (j) {$4$}
      child  {node [circle,draw] {$4$}}
     }
  child {node [circle,draw] (j) {$5$}
       child  {node [circle,draw] {$6$}
            child  {node [circle,draw] {$6$}}
        }
     }
  child {node [circle,draw] (a) {$0$}
      child {node [circle,draw] (b) {$0$}
        child {node [circle,draw] {$0$}
          child {node [circle,draw] (d) {$0$}
             }
         }
        }
       }
     ;
\end{tikzpicture}

The root is corresponding to the evaluation set $D_1=\{1,2,3\}$. The expected deep holes are generated by functions equivalent to
$f=\prod_{i=1}^{2}(x-i)$. In depth 2, the evaluation set is $D_2=\{1,2,3,4\}$. One of the expected deep holes is generated by the function $f=\prod_{i=1}^{2}(x-i)+\prod_{i=1}^{3}(x-i)$,
which is equivalent to $f=\frac{1}{x-5}$. In depth 3, the evaluation set is $D_3=\{1,2,3,4,5\}$. One of the expected deep holes is generated by the function $f=\prod_{i=1}^{2}(x-i)+4\prod_{i=1}^{3}(x-i)+4\prod_{i=1}^{4}(x-i)$, which is equivalent to $f=\frac{1}{x-6}$. In depth 4, the evaluation set is $D_4=\{1,2,3,4,5,6\}$.
One of the expected deep holes is generated by the function $f=\prod_{i=1}^{2}(x-i)+5\prod_{i=1}^{3}(x-i)+6\prod_{i=1}^{4}(x-i)+6\prod_{i=1}^{5}(x-i)$, which is equivalent to $f=\frac{1}{x}$. In depth 5, the evaluation set is $D_5=\{1,2,3,4,5,6,7\}$. One of the expected deep holes is generated by the function $f=\prod_{i=1}^{2}(x-i)$.

(2) The full deep hole tree is as follows:

\begin{tikzpicture}[level/.style={sibling distance=30mm/#1}, scale=1.0, every node/.style={scale=0.8}]
\node [circle,draw] (z){$1$}
  child {node [circle,draw] (j) {$1$}
      child  {node [circle,draw] {$3$}}
      child  {node [circle,draw] {$6$}}
    }
  child {node [circle,draw] (j) {$4$}
      child  {node [circle,draw] {$4$}}
     }
  child {node [circle,draw] (j) {$5$}
      child  {node [circle,draw] {$1$}}
      child  {node [circle,draw] {$3$}}
      child  {node [circle,draw] {$6$}
            child  {node [circle,draw] {$6$}}
        }
     }
   child {node [circle,draw] (a) {$0$}
       child {node [circle,draw] (b) {$0$}
         child {node [circle,draw] {$0$}
           child {node [circle,draw] (d) {$0$}
             }
          }
         }
        }
     ;
\end{tikzpicture}

Note that there are more nodes here than the expected ones. For example, in depth 3, there is an additional deep hole generated by the function $f=\prod_{i=1}^{2}(x-i)+\prod_{i=1}^{3}(x-i)+3\prod_{i=1}^{4}(x-i)$. Also, there is an additional deep hole generated by the function
$f=\prod_{i=1}^{2}(x-i)+5\prod_{i=1}^{3}(x-i)+\prod_{i=1}^{4}(x-i)$.

\section{Proof of Theorem \ref{mainthm}}
The basic idea of the proof of Theorem \ref{mainthm} is reducing the problem to some additive number theory problems. We first present several lemmas.
\begin{lemma} \label{basis}
In depth $d=2$, the nodes are the same in both the expected deep hole tree and full deep hole tree.
\end{lemma}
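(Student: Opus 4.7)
The plan is to enumerate the depth-$2$ candidate nodes $c_2 \in \F_p$ explicitly and compare two subsets of $\F_p$: the ``good'' set (those $c_2$ for which $f_{c_2}(x) = \prod_{i=1}^k(x-\alpha_i) + c_2\prod_{i=1}^{k+1}(x-\alpha_i)$ generates a deep hole of $RS_p(D_2,k)$ with $D_2 = \{\alpha_1,\ldots,\alpha_{k+2}\}$) and the expected set, coming from $x^k$ and the $1/(x-\delta)$ deep holes. I would prove the lemma by showing these two sets are complementary in $\F_p$ and so must coincide.

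For the good set, I would invoke Proposition~\ref{prop}: $f_{c_2}$ is a deep hole iff, for each $i \in \{1,\ldots,k+2\}$, the Lagrange interpolant of $f_{c_2}$ on $D_2 \setminus \{\alpha_i\}$ has degree exactly $k$. Writing this interpolant as $P_i(x) = f_{c_2}(x) - c_2 \prod_{j \neq i,\, 1 \leq j \leq k+2}(x-\alpha_j)$ — whose $x^{k+1}$ terms cancel, so that $P_i$ has degree $\leq k$ and agrees with $f_{c_2}$ off $\alpha_i$ — the $x^k$ coefficient comes out as the affine expression $1 + c_2(\alpha_{k+2} - \alpha_i)$. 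This vanishes iff $c_2 = 1/(\alpha_i - \alpha_{k+2})$ (the case $i = k+2$ gives the constant $1$). Hence the ``bad'' set is $B = \{1/(\alpha_i - \alpha_{k+2}) : 1 \leq i \leq k+1\}$, of size exactly $k+1$ by distinctness of the $\alpha_i$.

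For the expected set $E$, the value $c_2 = 0$ comes from $f = x^k$. For $f = 1/(x-\delta)$ with $\delta \in \F_p \setminus D_2$, I would use a short divided-difference/partial-fraction identity (applied to $1/((x-\delta)\prod_j(x-\alpha_j))$, whose residues sum to $0$) to get leading Newton coefficients $-1/\prod_{i=1}^{k+1}(\delta - \alpha_i)$ and $-1/\prod_{i=1}^{k+2}(\delta - \alpha_i)$ at depths $1$ and $2$; after normalizing $c_1 = 1$, the depth-$2$ value is their ratio, $1/(\delta - \alpha_{k+2})$. Thus $E = \{0\} \cup \{1/(\delta - \alpha_{k+2}) : \delta \in \F_p \setminus D_2\}$.

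Finally I would match the two sets. Disjointness $E \cap B = \emptyset$ is immediate: $0 \notin B$, and $\delta \notin D_2$ forces $\delta - \alpha_{k+2} \neq \alpha_i - \alpha_{k+2}$ for every $i \leq k+1$. A size count gives $|E| + |B| = (p-k-1) + (k+1) = p$, so $E = \F_p \setminus B$, which is exactly the good set. There is no real obstacle — the argument is a direct Lagrange/divided-difference bookkeeping, with the one conceptual point being the recognition that the obstruction to $P_i$ dropping degree is an affine function of $c_2$ with a single root, so exactly $k+1$ values of $c_2$ are forbidden, matching the deficiency of $E$ in $\F_p$.
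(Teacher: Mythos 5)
Your proof is correct, and it reaches the conclusion by a noticeably different execution of the same counting skeleton. Both arguments rest on the complementarity count $|E|+|B|=(p-k-1)+(k+1)=p$, but the paper and you fill in the two halves differently. The paper cites Proposition~\ref{exptree} to know the expected nodes are deep holes and then proves they are pairwise distinct by a $(k+2)\times(k+2)$ determinant identity (the matrix with rows $1,x,\dots,x^{k-1},\frac{1}{x-\delta_1},\frac{1}{x-\delta_2}$); it then locates $k+1$ forbidden values of $c_2$ implicitly, through vanishing minors of the extended generator matrix (the case $f(\alpha_{k+2})=0$ plus one condition for each $(k-1)$-subset of $\{\alpha_1,\dots,\alpha_k\}$), and concludes by counting. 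You instead work entirely on the polynomial side of Proposition~\ref{prop}: the punctured interpolant $P_i=f_{c_2}-c_2\prod_{j\neq i}(x-\alpha_j)$ has $x^k$-coefficient $1+c_2(\alpha_{k+2}-\alpha_i)$, which pins the forbidden set down in closed form as $B=\{1/(\alpha_i-\alpha_{k+2}):1\leqslant i\leqslant k+1\}$, and the divided-difference formula for $1/(x-\delta)$ gives the expected depth-$2$ values explicitly as $1/(\delta-\alpha_{k+2})$, so distinctness of the expected nodes and disjointness $E\cap B=\emptyset$ become transparent rather than requiring the determinant computation or a separate distinctness check of the bad values (a point the paper glosses over when it asserts ``in total there are $k+1$'' bad candidates). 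Your closed forms are consistent with the paper's (its Case~1 value $1/(\alpha_{k+1}-\alpha_{k+2})$ is your $i=k+1$, and its Case~2 conditions solve to $1/(\alpha_j-\alpha_{k+2})$). What your route buys is a more elementary and more explicit depth-$2$ statement with no matrix manipulation; what the paper's route buys is that its determinant identities are exactly the machinery reused and generalized in Lemmas~\ref{case_1}--\ref{case_3} for the inductive step, so its formulation is the one that scales to the rest of the proof of Theorem~\ref{mainthm}.
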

\begin{proof}
We need to show that in depth $d=2$, the nodes are the same in both the expected deep hole tree and full deep hole tree.

In depth $d=2$, the evaluation set is $D=\{\alpha_1,\alpha_2,\ldots,\alpha_{k+2}\}$, the candidate generating function is
$$f(x)=\prod_{i=1}^{k}(x-\alpha_i)+c_2\prod_{i=1}^{k+1}(x-\alpha_i),\quad c_2\in \F_p.$$

Designate the set of nodes in depth $2$ of the expected deep hole tree as $S$.
Firstly, we show that if $c_2\in S$, then $f(x)$ generates a deep hole.
This follows from Theorem \ref{exptree}. Note that there are $p-(k+1)$ of them. Now we show that they are all distinct.
Considering the following square matrix
\[
G=
\begin{bmatrix}
1           & 1          & \cdots      & 1         \\
\vdots      & \vdots     & \ddots      & \vdots  \\
\alpha_1^{k-1} & \alpha_2^{k-1} & \cdots & \alpha_{k+2}^{k-1} \\
\frac{1}{\alpha_1-\delta_1}   & \frac{1}{\alpha_2-\delta_1}   & \cdots      & \frac{1}{\alpha_{k+2}-\delta_1} \\
\frac{1}{\alpha_1-\delta_2}   & \frac{1}{\alpha_2-\delta_2}   & \cdots      & \frac{1}{\alpha_{k+2}-\delta_2}
\end{bmatrix},
 \]
we compute
\[
\prod_{i=1}^{k+2}(\alpha_i-\delta_1)\det(G) =
\begin{vmatrix}
\alpha_1-\delta_1                             & \cdots        & \alpha_{k+2}-\delta_1                               \\
\alpha_1*(\alpha_1-\delta_1)                  & \cdots        & \alpha_{k+2}*(\alpha_{k+2}-\delta_1)                   \\
\vdots                                        & \ddots        & \vdots                                           \\
\alpha_1^{k-1}*(\alpha_1-\delta_1)            & \cdots        & \alpha_{k+2}^{k-1}*(\alpha_{k+2}-\delta_1)                \\
1                                             & \cdots        & 1                                               \\
\frac{\alpha_1-\delta_1}{\alpha_1-\delta_2}   & \cdots        & \frac{\alpha_{k+2}-\delta_1}{\alpha_{k+2}-\delta_2}
\end{vmatrix}
\]
\[
=
\begin{vmatrix}
\alpha_1                                      & \cdots        & \alpha_{k+2}                              \\
\alpha_1^2                                    & \cdots        & \alpha_{k+2}^2                  \\
\vdots                                        & \ddots        & \vdots                                           \\
\alpha_1^k                             & \cdots        & \alpha_{k+2}^k                \\
1                                             & \cdots        & 1                                               \\
\frac{\alpha_1-\delta_1}{\alpha_1-\delta_2}   & \cdots        & \frac{\alpha_{k+2}-\delta_1}{\alpha_{k+2}-\delta_2}
\end{vmatrix}
\]
\[
\qquad \quad =
(-1)^k
\begin{vmatrix}
1                                             & \cdots        & 1                                               \\
\alpha_1                                      & \cdots        & \alpha_{k+2}                              \\
\alpha_1^2                                    & \cdots        & \alpha_{k+2}^2                  \\
\vdots                                        & \ddots        & \vdots                                           \\
\alpha_1^k                             & \cdots        & \alpha_{k+2}^k                \\
\frac{\alpha_1-\delta_1}{\alpha_1-\delta_2}   & \cdots        & \frac{\alpha_{k+2}-\delta_1}{\alpha_{k+2}-\delta_2}
\end{vmatrix}
\]
\[
\qquad \qquad \qquad  =
(-1)^k
\begin{vmatrix}
1                                             & \cdots        & 1                                               \\
\alpha_1                                      & \cdots        & \alpha_{k+2}                              \\
\alpha_1^2                                    & \cdots        & \alpha_{k+2}^2                  \\
\vdots                                        & \ddots        & \vdots                                           \\
\alpha_1^k                                    & \cdots        & \alpha_{k+2}^k                \\
1+\frac{\delta_2-\delta_1}{\alpha_1-\delta_2} & \cdots        & 1+\frac{\delta_{2}-\delta_1}{\alpha_{k+2}-\delta_2}
\end{vmatrix}
\]
\[
\qquad \qquad \qquad \qquad  =
(-1)^k(\delta_2-\delta_1)
\begin{vmatrix}
1                                             & \cdots        & 1                                               \\
\alpha_1                                      & \cdots        & \alpha_{k+2}                              \\
\alpha_1^2                                    & \cdots        & \alpha_{k+2}^2                  \\
\vdots                                        & \ddots        & \vdots                                           \\
\alpha_1^k                                    & \cdots        & \alpha_{k+2}^k                \\
\frac{1}{\alpha_1-\delta_2}                   & \cdots        & \frac{1}{\alpha_{k+2}-\delta_2}
\end{vmatrix}
\neq 0.
\]
The last step follows from the same argument as above. Hence we prove that $c_2$ are different for $f_1=\frac{1}{x-\delta_1}$ and $f_2=\frac{1}{x-\delta_2}$.

Next, we show that if $c_2\notin S$ then $f(x)$ does not generate a deep hole. Consider the following matrix
$$
G=
\begin{bmatrix}
1                            & 1                 & \cdots                       & 1  \\
\alpha_1                     & \alpha_2          & \cdots                       & \alpha_{k+2}\\
\vdots                       & \vdots            & \ddots                       & \vdots   \\
\alpha_1^{k-1}               & \alpha_2^{k-1}    & \cdots                       & \alpha_{k+2}^{k-1} \\
f(\alpha_1)                  & f(\alpha_2)       & \cdots                       & f(\alpha_{k+2})
\end{bmatrix},
$$
where $f(i)=0,1\leqslant i\leqslant k, f(\alpha_{k+1})=\prod_{i=1}^{k}(\alpha_{k+1}-\alpha_i), f(\alpha_{k+2})=\prod_{i=1}^{k}(\alpha_{k+2}-\alpha_i)+c_{2}\prod_{i=1}^{k+1}(\alpha_{k+2}-\alpha_i) $.

\noindent
\textbf{Case 1.} If
\[
\begin{aligned}
f(\alpha_{k+2}) &= \prod_{i=1}^{k}(\alpha_{k+2}-\alpha_i)+c_2\prod_{i=1}^{k+1}(\alpha_{k+2}-\alpha_i) \\
                &= \prod_{i=1}^{k}(\alpha_{k+2}-\alpha_i)[1+c_2(\alpha_{k+2}-\alpha_{k+1})] \\
                &= 0,
\end{aligned}
\]
i.e., $c_2=\frac{1}{\alpha_{k+1}-\alpha_{k+2}}$, then there are $k+1$ columns of $G$ which are linearly dependent.
Thus $f(x)$ does not generate a deep hole in this case.

\noindent
\textbf{Case 2.} Suppose $f(\alpha_{k+2})\neq 0$. For any $k-1$ elements $\{\beta_1,\ldots,\beta_{k-1}\}\subset \{\alpha_1,\ldots,\alpha_k\}$, consider the submatrix
$$
G'=
\begin{bmatrix}
1                            & \cdots            & 1                            & 1                    & 1  \\
\beta_1                      & \cdots            & \beta_{k-1}                  & \alpha_{k+1}          & \alpha_{k+2}  \\
\vdots                       & \ddots            & \vdots                       & \vdots               & \vdots         \\
\beta_1^{k-1}                & \cdots            & \beta_{k-1}^{k-1}            & \alpha_{k+1}^{k-1}   & \alpha_{k+2}^{k-1} \\
0                            & \cdots            & 0                            & f(\alpha_{k+1})      & f(\alpha_{k+2})
\end{bmatrix}.
$$
Thus $\det(G')=0$ is equivalent to
\[
f(\alpha_{k+1})\prod_{i=1}^{k-1}(\alpha_{k+2}-\beta_i)=f(\alpha_{k+2})\prod_{i=1}^{k-1}(\alpha_{k+1}-\beta_i),
\]
that is,
\[
\begin{aligned}
\frac{f(\alpha_{k+2})}{f(\alpha_{k+1})} &= \prod_{i=1}^{k-1}\frac{\alpha_{k+2}-\beta_i}{\alpha_{k+1}-\beta_i} \\
&= \prod_{i=1}^{k-1}(1+\frac{\alpha_{k+2}-\alpha_{k+1}}{\alpha_{k+1}-\beta_i}).
\end{aligned}
\]
Hence for each subset of $\{\beta_1,\ldots,\beta_{k-1}\}\subset \{\alpha_1,\ldots,\alpha_k\}$, there is a unique $c_2$ such that $\det(G')=0$.

In total, there are $k+1$ elements of candidate $c_2$ such that the corresponding $f(x)$ does not generate a deep hole. This implies that if $c_2\notin S$ then $f(x)$ does not generate a deep hole.

In conclusion, in depth $d=2$, the nodes in the full deep hole tree are exactly those in the expected deep hole tree.
\end{proof}

\begin{lemma} \label{case_1}
Let $p$ be an odd prime, $k\geqslant \frac{p-1}{2}, d\geqslant 2$ be a positive integer and $D_d=\{\alpha_1,\ldots,\alpha_{k+d}\}\subset\F_p,\delta\in \F_p\setminus D_d$.
For any $\gamma\in \F_p$, there exists a subset $\{\beta_1,\ldots,\beta_k\}\subset D_d$ such that the matrix
\[
A=
\begin{bmatrix}
1                           & \cdots            & 1                            & 1  \\
\beta_1                     & \cdots            & \beta_k                      & \delta\\
\vdots                      & \ddots            & \vdots                       & \vdots   \\
\beta_1^{k-1}               & \cdots            &  \beta_{k}^{k-1}             & \delta^{k-1} \\
\frac{1}{\beta_1-\delta}    & \cdots            & \frac{1}{\beta_k-\delta}     & \gamma
\end{bmatrix},
\]
is singular.
\end{lemma}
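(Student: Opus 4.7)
The plan is to convert the singularity of $A$ into an explicit linear equation in $\gamma$, and then invoke Theorem \ref{SH} (Dias da Silva--Hamidoune) to show that this equation is solvable for every $\gamma\in\F_p$. Since the first $k$ rows of $A$ form a $k\times(k+1)$ Vandermonde matrix at the nodes $\beta_1,\ldots,\beta_k,\delta$, their row span equals the set of vectors $(g(\beta_1),\ldots,g(\beta_k),g(\delta))$ where $g\in\F_p[x]$ ranges over polynomials with $\deg g<k$. Hence $\det(A)=0$ if and only if there exists such a $g$ satisfying $g(\beta_j)=1/(\beta_j-\delta)$ for every $j$ and $g(\delta)=\gamma$. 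The $k$ interpolation conditions at the $\beta_j$'s already determine $g$ uniquely, so the entire question collapses to computing $g(\delta)$.

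To compute $g(\delta)$, I would introduce the auxiliary polynomial $h(x):=(x-\delta)g(x)-1$. This polynomial has degree at most $k$ and vanishes at each $\beta_j$, so $h(x)=c\,P(x)$ where $P(x):=\prod_{j=1}^{k}(x-\beta_j)$ and $c\in\F_p$ is a scalar. Setting $x=\delta$ gives $c=-1/P(\delta)$, and differentiating and then evaluating at $x=\delta$ gives
\[
g(\delta)\;=\;h'(\delta)\;=\;c\,P'(\delta)\;=\;-\frac{P'(\delta)}{P(\delta)}\;=\;\sum_{j=1}^{k}\frac{1}{\beta_j-\delta}
\]
by the standard logarithmic derivative identity. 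Therefore $\det(A)=0$ is equivalent to $\gamma=\sum_{j=1}^{k}1/(\beta_j-\delta)$.

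It then remains to show that every $\gamma\in\F_p$ is representable as such a sum for some $k$-subset $\{\beta_1,\ldots,\beta_k\}\subset D_d$. Set $S:=\{1/(\alpha_i-\delta)\mid\alpha_i\in D_d\}\subset\F_p^{*}$; since $\delta\notin D_d$ and the $\alpha_i$ are distinct, $|S|=k+d$. The claim becomes $k^{\wedge}S=\F_p$. Applying Theorem \ref{SH} gives
\[
|k^{\wedge}S|\;\geqslant\;\min\bigl\{p,\;k|S|-k^{2}+1\bigr\}\;=\;\min\{p,\;kd+1\},
\]
and the hypotheses $k\geqslant(p-1)/2$ and $d\geqslant 2$ yield $kd\geqslant p-1$, so $|k^{\wedge}S|=p$ and we are done.

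The only creative step is the trick $h(x)=(x-\delta)g(x)-1$ that uncovers the clean expression for $g(\delta)$; the remainder is routine linear algebra followed by a direct invocation of the Erd\"os--Heilbronn bound. The tightness of the hypothesis $k\geqslant(p-1)/2,\ d\geqslant 2$ is reassuring, as the estimate $kd\geqslant p-1$ is exactly what Theorem \ref{SH} demands in order to force $k^{\wedge}S$ to exhaust $\F_p$.
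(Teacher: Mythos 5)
Your proof is correct, and it reaches exactly the same pivot point as the paper --- the equivalence $\det(A)=0 \iff \gamma=\sum_{j=1}^{k}\frac{1}{\beta_j-\delta}$, followed by the same application of Theorem \ref{SH} to the set $\{1/(\alpha_i-\delta)\}$ of size $k+d$ with $kd+1\geqslant p$ --- but you derive that equivalence by a genuinely different and slicker route. The paper splits the last column to write $\det(A)=\det(A')+\det(A'')$ and then evaluates $\det(A')$ by multiplying rows by $\prod_i(\beta_i-\delta)$, permuting, and recognizing the result as the derivative of a Vandermonde determinant evaluated at $x=\delta$; this is about a page of explicit determinant manipulation. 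You instead observe that the top $k$ rows span exactly the evaluation vectors of polynomials $g$ of degree $<k$, so singularity means the unique interpolant of $\beta_j\mapsto 1/(\beta_j-\delta)$ satisfies $g(\delta)=\gamma$, and the auxiliary polynomial $h(x)=(x-\delta)g(x)-1=cP(x)$ together with the logarithmic derivative gives $g(\delta)=-P'(\delta)/P(\delta)=\sum_j 1/(\beta_j-\delta)$ in a few lines. (Your steps are all justified: the first $k$ rows are independent since the $\beta_j$ are distinct, $P(\delta)\neq 0$ since $\delta\notin D_d$, and $h\not\equiv 0$ because $h(\delta)=-1$.) What the paper's heavier computation buys is a template that it reuses almost verbatim in Lemmas \ref{case_2} and \ref{case_3}, where the last row is $1/(\beta_j-\delta')$ or $\beta_j^k$ and the split $\det(B)=\det(B')+\det(B'')$ again isolates $\gamma$ linearly; your interpolation argument would also adapt to those cases (compute $g(\delta)$ for the interpolant of the corresponding row), so nothing is lost, and the Erd\H{o}s--Heilbronn half of the argument is identical in both treatments.
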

\begin{proof}
Note that $\det(A)=\det(A')+\det(A'')$, where
\[
A'=
\begin{bmatrix}
1                            & \cdots            & 1                            & 1  \\
\beta_1                      & \cdots            & \beta_k                      & \delta\\
\vdots                       & \ddots            & \vdots                       & \vdots   \\
\beta_1^{k-1}                & \cdots            & \beta_{k}^{k-1}              & \delta^{k-1} \\
\frac{1}{\beta_1-\delta}     & \cdots            & \frac{1}{\beta_k-\delta}     & 0
\end{bmatrix},
A''=
\begin{bmatrix}
1                            & \cdots            & 1                            & 0 \\
\beta_1                      & \cdots            & \beta_k                      & 0 \\
\vdots                       & \ddots            & \vdots                       & 0 \\
\beta_1^{k-1}                & \cdots            & \beta_{k}^{k-1}              & 0 \\
\frac{1}{\beta_1-\delta}     & \cdots            & \frac{1}{\beta_k-\delta}     & \gamma
\end{bmatrix}.
\]
Since
\[
\prod_{i=1}^k(\beta_i-\delta)\det(A') =
\begin{vmatrix}
\beta_1-\delta                                  & \cdots            & \beta_k-\delta                              & 1  \\
\beta_1(\beta_1-\delta)                         & \cdots            & \beta_k(\beta_k-\delta)                     & \delta\\
\vdots                                          & \ddots            & \vdots                                      & \vdots   \\
\beta_1^{k-1}(\beta_1-\delta)                   & \cdots            & \beta_{k}^{k-1}(\beta_k-\delta)             & \delta^{k-1} \\
1                                               & \cdots            & 1                                           & 0 \notag
\end{vmatrix}
\]
\[
=
\begin{vmatrix}
\beta_1                             & \cdots            & \beta_k                         & 1  \\
\beta_1^2                           & \cdots            & \beta_k^2                       & 2\delta \\
\vdots                              & \ddots            & \vdots                          & \vdots   \\
\beta_1^k                           & \cdots            & \beta_{k}^k                     & k\delta^{k-1} \\
1                                   & \cdots            & 1                               & 0 \notag
\end{vmatrix}
\]
\[
\qquad\quad =(-1)^k
\begin{vmatrix}
1                                  & \cdots            & 1                               & 0 \\
\beta_1                            & \cdots            & \beta_k                         & 1  \\
\beta_1^2                          & \cdots            & \beta_k^2                       & 2\delta \\
\vdots                             & \ddots            & \vdots                          & \vdots   \\
\beta_1^k                          & \cdots            & \beta_{k}^k                     & k\delta^{k-1} \notag
\end{vmatrix}
\]
\[
\qquad\quad =(-1)^k
\begin{vmatrix}
1                                  & \cdots            & 1                               & \frac{d}{dx}1\bigg|_{x=\delta} \\
\beta_1                            & \cdots            & \beta_k                         & \frac{d}{dx}x\bigg|_{x=\delta}  \\
\beta_1^2                          & \cdots            & \beta_k^2                       & \frac{d}{dx}x^2\bigg|_{x=\delta} \\
\vdots                             & \ddots            & \vdots                          &  \vdots   \\
\beta_1^k                          & \cdots            & \beta_{k}^k                     & \frac{d}{dx}x^k\bigg|_{x=\delta} \notag
\end{vmatrix}
\]
\[
\qquad\qquad \quad =(-1)^k\frac{d}{dx}
\begin{vmatrix}
1                                 & \cdots            & 1                               & 1 \\
\beta_1                           & \cdots            & \beta_k                         & x  \\
\beta_1^2                         & \cdots            & \beta_k^2                       & x^2 \\
\vdots                            & \ddots            & \vdots                          & \vdots   \\
\beta_1^k                         & \cdots            & \beta_{k}^k                     & x^k \notag
\end{vmatrix}
\bigg|_{x=\delta}
\]
\[
\qquad \qquad \qquad \qquad\qquad \qquad = (-1)^k\frac{d}{dx}\left[\prod_{1\leqslant i<j \leqslant k}(\beta_j-\beta_i)\prod_{i=1}^k(x-\beta_i)\right]\bigg|_{x=\delta},
\]
thus
\[
\begin{aligned}
\det(A') &= \frac{(-1)^k}{\prod_{i=1}^k(\beta_i-\delta)}\prod_{1\leqslant i<j\leqslant k}(\beta_j-\beta_i)\frac{d}{dx}\left[\prod_{i=1}^k(x-\beta_i)\right]\bigg|_{x=\delta}\\
         &= \frac{(-1)^k}{\prod_{i=1}^k(\beta_i-\delta)}\prod_{1\leqslant i<j\leqslant k}(\beta_j-\beta_i)\prod_{i=1}^k(\delta-\beta_i)\sum_{i=1}^k\frac{1}{\delta-\beta_i}\\
         &= \prod_{1\leqslant i<j\leqslant k}(\beta_j-\beta_i)\sum_{i=1}^k\frac{1}{\delta-\beta_i}.
\end{aligned}
\]
It follows that
\[
\begin{aligned}
\det(A) & = \det(A') + \det(A'') \\
        & = \prod_{1\leqslant i<j\leqslant k}(\beta_j-\beta_i)\sum_{i=1}^k\frac{1}{\delta-\beta_i}+\gamma\prod_{1\leqslant i<j \leqslant k}(\beta_j-\beta_i)
\end{aligned}
\]
Hence $\det(A)=0$ is equivalent to
\[
\sum_{i=1}^k\frac{1}{\delta-\beta_i}+\gamma=0.
\]

Designate the set $\{\frac{1}{\delta-\beta_i}|i\in D_{d}\}$ as $S_1$ with cardinality $k+d$. Since $\frac{p-1}{2}\leqslant k,2\leqslant d,$ from Theorem \ref{SH}, we conclude that
\[
\begin{aligned}
|k ^\wedge S_1| &\geqslant \min\{p,k|S_1|-k^2+1\} \\
                &=p,
\end{aligned}
\]
which implies that for each $\gamma \in \F_p$, there exists a subset $\{\beta_1,\ldots,\beta_k\}\subset D_{k}$ such that $\sum_{i=1}^k\frac{1}{\delta-\beta_i}+\gamma=0.$
\end{proof}

\begin{lemma} \label{case_2}
Let $p$ be an odd prime, $k\geqslant \frac{p-1}{2}, d\geqslant 2$ be a positive integer and $D_{d+1}=\{\alpha_1,\ldots,\alpha_{k+d+1}=\delta\}\subset\F_p$.
For any $\delta'\in\F_p,\delta'\notin D_{d+1},\gamma\in\F_p,\gamma\neq \frac{1}{\delta-\delta'}$, there exists a subset $\{\beta_1,\ldots,\beta_k\}\subset D_{d+1}\setminus\{\delta\}$ such that the matrix
\[
B=
\begin{bmatrix}
1                            & \cdots            & 1                            & 1  \\
\beta_1                      & \cdots            & \beta_k                      & \delta\\
\vdots                       & \ddots            & \vdots                       & \vdots   \\
\beta_1^{k-1}                & \cdots            & \beta_{k}^{k-1}              & \delta^{k-1} \\
\frac{1}{\beta_1-\delta'}    & \cdots            & \frac{1}{\beta_k-\delta'}    & \gamma
\end{bmatrix}
\]
is singular.
\end{lemma}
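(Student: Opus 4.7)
The plan is to run a two-phase argument parallel to Lemma~\ref{case_1}: first reduce $\det(B)=0$ to a single algebraic identity in $\beta_1,\ldots,\beta_k$, then convert this identity into a combinatorial covering statement in $\F_p^*$. Because $B$ involves two distinct reference points $\delta$ and $\delta'$ (one inside the evaluation set and one outside), the resulting identity will be multiplicative rather than additive, so Corollary~\ref{2product} will take the role that Theorem~\ref{SH} played in Lemma~\ref{case_1}.

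For the first phase I would split $B=B'+B''$ along the last column exactly as in Lemma~\ref{case_1}. Laplace expansion gives $\det(B'')=\gamma\prod_{i<j}(\beta_j-\beta_i)$. For $\det(B')$, I would expand along its last row and recognize each cofactor as a Vandermonde on $\{\beta_1,\ldots,\widehat{\beta_i},\ldots,\beta_k,\delta\}$. After regrouping, the ratio $-\det(B')/\prod_{i<j}(\beta_j-\beta_i)$ equals $P(\delta)$, where $P$ is the Lagrange interpolant through $\beta_1,\ldots,\beta_k$ of the rational function $f(y)=1/(y-\delta')$. Since $f$ has a single simple pole at $\delta'$, the remainder is $f(y)-P(y)=\prod_{i=1}^k(y-\beta_i)\big/\bigl[(y-\delta')\prod_{i=1}^k(\delta'-\beta_i)\bigr]$, and evaluating at $\delta$ turns $\det(B)=0$ into
\[
\prod_{i=1}^{k}\frac{\delta-\beta_i}{\delta'-\beta_i}=1-\gamma(\delta-\delta').
\]
The hypothesis $\gamma\neq 1/(\delta-\delta')$ is exactly what ensures the right-hand side is a nonzero element of $\F_p^*$.

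For the second phase, set $t_j=(\delta-\alpha_j)/(\delta'-\alpha_j)$ for $j=1,\ldots,k+d$. These values are well-defined (since $\delta'\notin D_{d+1}$), nonzero (since $\alpha_j\neq\delta$), and pairwise distinct (since the M\"obius map $y\mapsto(\delta-y)/(\delta'-y)$ is injective on $\F_p\cup\{\infty\}$). Passing to complementary $d$-subsets, the claim reduces to showing that every $\mu\in\F_p^*$ is a product of exactly $d$ distinct $t_j$'s. When $d=2$ this is immediate from Corollary~\ref{2product}, since $k+d\geqslant (p+3)/2>(p+1)/2$. When $d>2$, I would fix any $d-2$ of the $t_j$ arbitrarily and apply the same corollary to the remaining $k+2>(p+1)/2$ distinct elements, choosing the two missing factors so that the total product equals the prescribed $\mu$.

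The main obstacle I anticipate is the determinant computation in the first phase. The derivative-of-Vandermonde trick used in Lemma~\ref{case_1} exploited the coincidence of $\delta$ appearing both in the last row and in the last column; here those slots are occupied by $\delta'$ and $\delta$ respectively, so that shortcut fails. Switching to the Lagrange-interpolation viewpoint gives a clean closed form for $\det(B')$, but requires careful sign tracking in the $(k+1)\times(k+1)$ cofactor expansion, and the resulting multiplicative identity is what forces the two-term-product detour via Corollary~\ref{2product} instead of a direct appeal to Theorem~\ref{SH}.
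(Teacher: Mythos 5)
Your proposal is correct and takes essentially the same route as the paper: split $\det(B)$ along the last column, reduce singularity to the multiplicative identity $\prod_{i=1}^{k}\frac{\delta-\beta_i}{\delta'-\beta_i}=1-\gamma(\delta-\delta')$ (nonzero right-hand side exactly when $\gamma\neq\frac{1}{\delta-\delta'}$), and then realize any prescribed value in $\F_p^*$ via Corollary~\ref{2product} through the complement trick, your ``fix $d-2$ elements and choose the remaining two'' being the same device as the paper's restriction to a $(k+2)$-subset. Your Lagrange-remainder evaluation of $\det(B')$ is just a cleaner substitute for the paper's explicit row/column manipulations and does not change the argument.
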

\begin{proof}
Note that $\det(B)=\det(B')+\det(B'')$, where
\[
B'=
\begin{bmatrix}
1                            & \cdots            & 1                            & 1  \\
\beta_1                      & \cdots            & \beta_k                      & \delta\\
\vdots                       & \ddots            & \vdots                       & \vdots   \\
\beta_1^{k-1}                & \cdots            & \beta_{k}^{k-1}              & \delta^{k-1} \\
\frac{1}{\beta_1-\delta'}    & \cdots            & \frac{1}{\beta_k-\delta'}    & \frac{1}{\delta-\delta'}
\end{bmatrix},
B''=
\begin{bmatrix}
1                            & \cdots            & 1                            & 0 \\
\beta_1                      & \cdots            & \beta_k                      & 0 \\
\vdots                       & \ddots            & \vdots                       & 0 \\
\beta_1^{k-1}                & \cdots            & \beta_{k}^{k-1}              & 0 \\
\frac{1}{\beta_1-\delta'}    & \cdots            & \frac{1}{\beta_k-\delta'}    & \gamma-\frac{1}{\delta-\delta'}
\end{bmatrix}.
\]
Since
\[
\qquad \qquad \quad \prod_{i=1}^k(\beta_i-\delta)\det(B') =
\begin{vmatrix}
\beta_1-\delta'                                  & \cdots            & \beta_k-\delta'                                  & \delta-\delta'  \\
\beta_1(\beta_1-\delta')                         & \cdots            & \beta_k(\beta_k-\delta')                         & \delta(\delta-\delta')\\
\vdots                                           & \ddots            & \vdots                                           & \vdots   \\
\beta_1^{k-1}(\beta_1-\delta')                   & \cdots            & \beta_{k}^{k-1}(\beta_k-\delta')                 & \delta^{k-1}(\delta-\delta') \\
1                                                & \cdots            & 1                                                & 1 \notag
\end{vmatrix}
\]
\[
=
\begin{vmatrix}
\beta_1                             & \cdots            & \beta_k                         & \delta   \\
\beta_1^2                           & \cdots            & \beta_k^2                       & \delta^2 \\
\vdots                              & \ddots            & \vdots                          & \vdots   \\
\beta_1^k                           & \cdots            & \beta_{k}^k                     & \delta^{k} \\
1                                   & \cdots            & 1                               & 1 \notag
\end{vmatrix}
\]
\[
\qquad =(-1)^k
\begin{vmatrix}
1                               & \cdots                & 1                         & 1       \\
\beta_1                         & \cdots                & \beta_k                   & \delta  \\
\beta_1^2                       & \cdots                & \beta_k^2                 & \delta^2 \\
\vdots                          & \ddots                & \vdots                    & \vdots   \\
\beta_1^k                       & \cdots                & \beta_{k}^k               & \delta^{k} \notag
\end{vmatrix}
\]
\[
\qquad \qquad \qquad \qquad =(-1)^k\prod_{1\leqslant i<j \leqslant k}(\beta_j-\beta_i)\prod_{i=1}^k(\delta-\beta_i),
\]
we have
\[
\begin{aligned}
\det(B') &= \frac{(-1)^k}{(\delta-\delta')\prod_{i=1}^{k}(\beta_i-\delta')}\prod_{1\leqslant i < j \leqslant j}(\beta_j-\beta_i)\prod_{i=1}^{k}(\delta-\beta_i) \\
         &= \frac{1}{\delta-\delta'}\prod_{1\leqslant i<j \leqslant k}(\beta_j-\beta_i)\prod_{i=1}^{k}\frac{\beta_i-\delta}{\beta_i-\delta'},
\end{aligned}
\]
and
\[
\det(B'')=(\gamma-\frac{1}{\delta-\delta'})\prod_{1\leqslant i<j \leqslant k}(\beta_j-\beta_i).
\]
Hence
\[
\begin{aligned}
\det(B)  &= \frac{1}{\delta-\delta'}\prod_{1\leqslant i<j \leqslant k}(\beta_j-\beta_i)\prod_{i=1}^{k}\frac{\beta_i-\delta}{\beta_i-\delta'}+(\gamma-\frac{1}{\delta-\delta'})\prod_{1\leqslant i<j \leqslant k}(\beta_j-\beta_i) \\
         &= \prod_{1\leqslant i<j \leqslant k}(\beta_j-\beta_i)\left[\frac{1}{\delta-\delta'}\prod_{i=1}^{k}\frac{\beta_i-\delta}{\beta_i-\delta'}+\frac{\gamma(\delta-\delta')-1}{\delta-\delta'}\right]\\
         &= \frac{\prod_{1\leqslant i<j \leqslant k}(\beta_j-\beta_i)}{\delta-\delta'}\left[\prod_{i=1}^{k}\frac{\beta_i-\delta}{\beta_i-\delta'}+\gamma(\delta-\delta')+1\right].
\end{aligned}.
\]

It follows that $\det(B)=0$ is equivalent to
\[
\prod_{i=1}^{k}(1+\frac{\delta'-\delta}{\beta_i-\delta'})=1-\gamma(\delta-\delta').
\]

If $|D_d|=k+2$, we consider the ``dual" version of the equality.
From Corollary \ref{2product}, there exist two distinct elements $x,y\in D_{d}$ such that $(1+\frac{\delta'-\delta}{x-\delta'})(1+\frac{\delta'-\delta}{y-\delta'})=\theta$
for any $\theta\in \F_p^*$, hence there exist $k$ distinct elements in $D_{d}$ such that
\[
\prod_{i=1}^{k}(1+\frac{\delta'-\delta}{\alpha_i-\delta'})=1-\gamma(\delta-\delta'),
\]
for any $\gamma\neq \frac{1}{\delta-\delta'}$.

If $|D_{d}|>k+2$, we select a subset $D'\subset D_{d}$ such that $|D'|=k+2$, then apply the same argument as above.
\end{proof}

\begin{lemma} \label{case_3}
Let $p$ be an odd prime, $k\geqslant \frac{p-1}{2}, d\geqslant 2$ be a positive integer and $D_{d+1}=\{\alpha_1,\ldots,\alpha_{k+d+1}=\delta\}\subset\F_p$.
For any $\gamma\in\F_p,\gamma\neq \delta^k$, there exists a subset $\{\beta_1,\ldots,\beta_k\}\subset D_{d+1}\setminus\{\delta\}$ such that the matrix
\[
B=
\begin{bmatrix}
1                            & \cdots            & 1                            & 1  \\
\beta_1                      & \cdots            & \beta_k                      & \delta\\
\vdots                       & \ddots            & \vdots                       & \vdots   \\
\beta_1^{k-1}                & \cdots            & \beta_{k}^{k-1}              & \delta^{k-1} \\
\beta_1^k				     & \cdots            & \beta_k^k    & \gamma
\end{bmatrix}
\]
is singular.
\end{lemma}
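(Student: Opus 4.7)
The plan is to follow the template of Lemmas~\ref{case_1} and \ref{case_2}: split $B$ into two matrices by multilinearity in the final column, compute each summand, and reduce $\det(B)=0$ to a single multiplicative identity in $\F_p^*$ that can be solved using the additive combinatorics results of Section~2.

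First I would decompose the last column of $B$ as $(1,\delta,\ldots,\delta^{k-1},\delta^k)^{T}+(0,\ldots,0,\gamma-\delta^k)^{T}$, yielding $\det(B)=\det(B')+\det(B'')$. The first summand turns $B$ into the standard $(k+1)\times(k+1)$ Vandermonde matrix in the nodes $\beta_1,\ldots,\beta_k,\delta$, whose determinant factors as
\[
\det(B')=\prod_{1\leqslant i<j\leqslant k}(\beta_j-\beta_i)\cdot\prod_{i=1}^{k}(\delta-\beta_i),
\]
while cofactor expansion along the last column of $B''$ collapses to $\det(B'')=(\gamma-\delta^k)\prod_{i<j}(\beta_j-\beta_i)$. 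Since the $\beta_i$ are distinct, the common Vandermonde factor is nonzero, and $\det(B)=0$ is equivalent to
\[
\prod_{i=1}^{k}(\delta-\beta_i)=\delta^k-\gamma,
\]
a nonzero element of $\F_p^*$ by the hypothesis $\gamma\neq\delta^k$.

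The remaining task is to realize $\delta^k-\gamma$ as a product of $k$ distinct ``shifts'' from $S:=\{\delta-\alpha:\alpha\in D_{d+1}\setminus\{\delta\}\}\subset\F_p^*$, a set of size $k+d$. I would mimic the dual trick at the end of Lemma~\ref{case_2}: since $d\geqslant 2$, first fix any $d-2$ elements of $S$ to be omitted, leaving a subset $S'\subset S$ of size exactly $k+2$. Writing $P':=\prod_{s\in S'}s\in\F_p^*$, the target identity translates into asking that the two elements of $S'$ that are \emph{not} chosen multiply to $P'/(\delta^k-\gamma)\in\F_p^*$. Because $k\geqslant(p-1)/2$ gives $|S'|=k+2\geqslant(p+3)/2>(p+1)/2$, Corollary~\ref{2product} supplies the required pair of distinct elements.

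The only delicate point is this last reduction. A direct attack via the Erd\H{o}s--Heilbronn bound (Theorem~\ref{SH}) to realize each target as a $k$-fold product would transport the problem to the cyclic group $\F_p^*\cong\Z/(p-1)\Z$, which is not a prime-order additive group, so the theorem does not apply cleanly. Converting the $k$-product problem into a $2$-product problem on a subset of size $k+2$ is precisely what lets Corollary~\ref{2product}, which does apply to $\F_p^*$ directly, close the argument.
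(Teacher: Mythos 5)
Your proposal is correct and follows essentially the same route as the paper: the same column decomposition $\det(B)=\det(B')+\det(B'')$, the same reduction of singularity to $\prod_{i=1}^{k}(\delta-\beta_i)=\delta^k-\gamma$, and the same ``dual'' use of Corollary~\ref{2product} on a size-$(k+2)$ subset of $S=\{\delta-\alpha\}$. In fact your phrasing of the dual step (the two \emph{omitted} elements must multiply to $P'/(\delta^k-\gamma)$) is stated more carefully than in the paper, and your remark about why Theorem~\ref{SH} does not apply directly to $\F_p^*$ is apt.
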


\begin{proof}
Note that $\det(B)=\det(B')+\det(B'')$, where
\[
B'=
\begin{bmatrix}
1                            & \cdots            & 1                            & 1  \\
\beta_1                      & \cdots            & \beta_k                      & \delta\\
\vdots                       & \ddots            & \vdots                       & \vdots   \\
\beta_1^{k-1}                & \cdots            & \beta_{k}^{k-1}              & \delta^{k-1} \\
\frac{1}{\beta_1-\delta'}    & \cdots            & \frac{1}{\beta_k-\delta'}    & \delta^k
\end{bmatrix},
B''=
\begin{bmatrix}
1                            & \cdots            & 1                            & 0 \\
\beta_1                      & \cdots            & \beta_k                      & 0 \\
\vdots                       & \ddots            & \vdots                       & 0 \\
\beta_1^{k-1}                & \cdots            & \beta_{k}^{k-1}              & 0 \\
\frac{1}{\beta_1-\delta'}    & \cdots            & \frac{1}{\beta_k-\delta'}    & \gamma-\delta^k
\end{bmatrix}.
\]
Since
\[
\det(B') = \prod_{1\leqslant i < j \leqslant k}(\beta_j-\beta_i)\prod_{i=1}^{k}(\delta-\beta_i),
\]
\[
\det(B'')= \prod_{1\leqslant i < j \leqslant k}(\beta_j-\beta_i)(\gamma-\delta^k),
\]
we have
\[
\frac{1}{\prod_{1\leq i < j \leq k}(\beta_j-\beta_i)}\det(B) =  \prod_{i=1}^{k}(\delta-\beta_i) + \gamma-\delta^k.
\]
Thus $\det(B)=0$ is equivalent to
\[
\prod_{i=1}^{k}(\delta-\beta_i) = \delta^k - \gamma.
\]
Consider $S=\{\delta-\alpha|\alpha\in D\}$.
If $|D_d|=k+2$, we consider the ``dual" version of the equality.
From Corollary \ref{2product}, there exist two distinct elements $x,y\in S$ such that $xy=\delta^k-\gamma$
for any $\delta^k-\gamma\in \F_p^*$, hence there exist $k$ distinct elements in $S$ such that
\[
\prod_{i=1}^{k}(\delta-\beta_i) = \delta^k - \gamma,
\]
for any $\gamma\neq \delta^k$.

If $|D_{d}|>k+2$, we select a subset $S'\subset S$ such that $S=k+2$, then apply the same argument as above. 
\end{proof}

Now we prove Theorem \ref{mainthm}.
\begin{proof} (of Theorem \ref{mainthm}) Proceed by induction on the depth of the full deep hole tree.

\noindent
\textbf{Basis case} This follows from Lemma \ref{basis}.

\noindent
\textbf{Inductive step}
We need to show that if the set of nodes of the full deep hole tree coincide with the nodes of the expected deep hole tree in the same depth $d \geqslant 2$, then there are no additional nodes in depth $d+1$ except the expected ones. Denote the corresponding evaluation set by $D_{d}=\{\alpha_1,\ldots,\alpha_{k+d}\}$ in depth $d$ and $D_{d+1}=\{\alpha_1,\ldots,\alpha_{k+d},\alpha_{k+d+1}=\delta \}$ in depth $d+1$. In order to show there are no new nodes in depth $d+1$, There are three cases to consider.

\noindent
\textbf{Case 1:} We need to show the branch, which is corresponding to the function $f=\frac{1}{x-\delta}$, will not continue in the depth $d+1$.
It suffices to show that there exists a subset $\{\beta_1,\ldots,\beta_k\}\subset \{\alpha_1,\ldots,\alpha_{k+d}\}$ such that for any $\gamma\in \F_p$ and matrix
\[
A=
\begin{bmatrix}
1                           & \cdots            & 1                            & 1  \\
\beta_1                     & \cdots            & \beta_k                      & \delta\\
\vdots                      & \ddots            & \vdots                       & \vdots   \\
\beta_1^{k-1}               & \cdots            &  \beta_{k}^{k-1}             & \delta^{k-1} \\
\frac{1}{\beta_1-\delta}    & \cdots            & \frac{1}{\beta_k-\delta}     & \gamma
\end{bmatrix}
\]
we have $\det(A)=0$. This follows from Lemma \ref{case_1}.

\noindent
\textbf{Case 2:} We need to show that the branch, which is corresponding to the function $f=\frac{1}{x-\delta'}$, where $\delta' \notin D_{k+1}$, has only one child in depth $d+1$.
It suffices to show that there exists a subset $\{\beta_1,\ldots,\beta_k\}\subset D_{d}$ such that for any $\delta'\notin D_{d+1},\gamma\in\F_p,\gamma\neq \frac{1}{\delta-\delta'}$ and matrix
\[
B=
\begin{bmatrix}
1                            & \cdots            & 1                            & 1  \\
\beta_1                      & \cdots            & \beta_k                      & \delta\\
\vdots                       & \ddots            & \vdots                       & \vdots   \\
\beta_1^{k-1}                & \cdots            & \beta_{k}^{k-1}              & \delta^{k-1} \\
\frac{1}{\beta_1-\delta'}    & \cdots            & \frac{1}{\beta_k-\delta'}    & \gamma
\end{bmatrix}
\]
we have $\det(B)=0$. This follows from Lemma \ref{case_2}.

\noindent
\textbf{Case 3:} We need to show that the branch, which is corresponding to the function $f=x^k$ has only one child in each depth.
It suffices to show that there exists a subset $\{\beta_1,\ldots,\beta_k\}\subset D_{d}$ such that for any $\gamma\neq \delta^k$ and matrix
\[
B=
\begin{bmatrix}
1                            & \cdots            & 1                            & 1  \\
\beta_1                      & \cdots            & \beta_k                      & \delta\\
\vdots                       & \ddots            & \vdots                       & \vdots   \\
\beta_1^{k-1}                & \cdots            & \beta_{k}^{k-1}              & \delta^{k-1} \\
\beta_1^k			         & \cdots            & \beta_k^k				    & \gamma
\end{bmatrix}
\]
we have $\det(B)=0$. This follows from Lemma \ref{case_3}.

From the principle of induction, the theorem is proved.
\end{proof}

\section{Proof of Theorem \ref{geometrythm}}
\begin{proof} There are 3 cases to prove.

\noindent
\textbf{Case 1.} Let $RS_q(F_q, k)$ be an extended GRS code over the finite
field $\F_q$ whose characteristic $p$ is odd. Let one of its generator matrix be
$$
G=
\begin{bmatrix}
1           & 1          & \cdots & 1 \\
\alpha_1    & \alpha_2   & \cdots & \alpha_q\\
\alpha_1^2  & \alpha_2^2 & \cdots & \alpha_q^2\\
\vdots      & \vdots     & \ddots & \vdots \\
\alpha_1^{k-1} & \alpha_2^{k-1} & \cdots & \alpha_q^{k-1}
\end{bmatrix},
$$
where $\alpha_1,\ldots,\alpha_q$ are distinct element of  $\F_q$.

Suppose a word $u\in \F_q^q$ is a deep hole of $RS_q(F_q, k)$. From proposition \ref{prop}, this is equivalent to the fact that
$$
G'=\left[
\begin{array}{c}
 G\\ \hline
 u
\end{array}\right]
$$
generates another linear MDS code, where 
\[
u=(u_1,u_2,\ldots,u_q).
\]
Thus the set
$$S=\{c_1,\ldots,c_q\}\cup\{(0,\ldots,0,1)\},$$
where $c_i$ is the $i$-th column of $G'$ for $1\leqslant i \leqslant q$, has size $q+1$ and has the property that every subset of $S$ of size $k+1$ is a basis.

Since $k+1\leqslant p$ or $3\leqslant q-p+1\leqslant k+1\leqslant q-2$, by Theorem \ref{Simeon}, we deduce that $S$ is equivalent to the set
$$\{(1,\alpha,\alpha^2,\ldots,\alpha^k)\mid \alpha\in \F_q\}\cup\{(0,\ldots,0,1)\}.$$

Thus we conclude that
$$u(x)=ax^k+f_{\leqslant k-1}(x), a\neq 0;$$
where $f_{\leqslant k-1}(x)$ denotes a polynomial with degree not larger
than $k-1$.

\noindent
\textbf{Case 2.} Firstly, we get an estimation of $N_{min}(k,q)$. Combining Theorem \ref{B.Segre} and Lemma \ref{RL}, we conclude that
$$
\begin{aligned}
N_{min}(k,q) &\leqslant N_{min}(3,q)+k-3\\
             &\leqslant \lceil{q-\dfrac{\sqrt{q}-7}{4}}\rceil +k-3 \\
             &\leqslant q-1.
\end{aligned}
$$

Now let $G$ be a generator matrix of $RS_q(F_q^*,k)$ of the following form
$$
G=
\begin{bmatrix}
1           & 1          & \cdots & 1 \\
\alpha_1    & \alpha_2   & \cdots & \alpha_{q-1}\\
\alpha_1^2  & \alpha_2^2 & \cdots & \alpha_{q-1}^2\\
\vdots      & \vdots     & \ddots & \vdots \\
\alpha_1^{k-1} & \alpha_2^{k-1} & \cdots & \alpha_{q-1}^{k-1}
\end{bmatrix},
$$
where $\alpha_1,\ldots,\alpha_{q-1}$ are distinct element of  $\F_q^*$.
From proposition \ref{prop}, a word $u\in \F_q^{q-1}$ is a deep hole of $RS_q(F_q^*, k)$ if and only if
$$
G'=\left[
\begin{array}{c}
 G\\ \hline
 u
\end{array}\right]
$$
generates another linear MDS code $\C_2$, where
\[
u=(u_1,u_2,\ldots,u_{q-1}).
\]
Since $\C_2$ is of length $q-1$, thus the matrix $G'$ is equivalent to a Vandermonde matrix of rank $k$.
Notice that $G$ is the given Vandermonde matrix of rank $k-1$. Thus there are two possibilities of $u$, i.e., its Lagrange interpolation
polynomial satisfies the following conditions:
$$u(x)=ax^k+f_{\leqslant k-1}(x), a\neq 0;$$
or
$$u(x)=bx^{q-2}+f_{\leqslant k-1}(x), b\neq 0;$$
where $f_{\leqslant k-1}(x)$ denotes a polynomial with degree not larger
than $k-1$.

\noindent
\textbf{Case 3.} This is similar with the proof of case 2 and we will make use of Theorem \ref{Voloch}.
\end{proof}

\section{Concluding Remarks}

In this paper,  we classify deep holes completely for Generalized Reed-Solomon codes $RS_p(D,k)$, where
$p$ is a prime, $|D| > k \geqslant \frac{p-1}{2}$. We suspect that a similar result
hold over finite fields of composite order, and
leave it as an open problem.

\bibliographystyle{plain}
\bibliography{comp_deep_hole}

\begin{thebibliography}{10}

\bibitem{ANR96}
M.~Nathanson A.~Alon and I.~Ruzsa.
\newblock The polynomial method and restricted sums of congruence classes.
\newblock {\em Journal of Number Theory}, 56(2):404--417, 1996.

\bibitem{Ball12}
S.~Ball.
\newblock On sets of vectors of a finite vector space in which every subset of
  basis size is a basis.
\newblock {\em Journal of the European Mathematical Society}, 14(3):733--748,
  2012.

\bibitem{B78}
W.~Brakemeier.
\newblock Eine anzahlformel von zahlen modulo n.
\newblock {\em Monatshefte f{\"u}r Mathematik}, 85:277--282, 1978.

\bibitem{CMP12}
A.~Cafure, G.~Matera, and M.~Privitelli.
\newblock Singularities of symmetric hypersurfaces and an application to
  reed-solomon codes.
\newblock {\em Advances in Mathematics of Communications}, 6(1):69--94, 2012.

\bibitem{CM07}
Q.~Cheng and E.~Murray.
\newblock On deciding deep holes of {R}eed-{S}olomon codes.
\newblock In {\em TAMC}, pages 296--305, 2007.

\bibitem{CW07}
Q.~Cheng and D.~Wan.
\newblock On the list and bounded distance decodability of {R}eed-{S}olomon
  codes.
\newblock {\em SIAM Journal on Computing}, 37(1):195--209.

\bibitem{GGP99}
L.~Gallardo, G.~Grekos, and J.~Pihko.
\newblock On a variant of the {E}rd{\"o}s-{G}inzburg-{Z}iv theorem.
\newblock {\em Acta Arithmetica}, 89:331--336, 1999.

\bibitem{GS99}
V.~Guruswami and M.~Sudan.
\newblock Improved decoding of {R}eed-{S}olomon and algebraic-geometry codes.
\newblock {\em IEEE Transaction on Information Theory}, 45(6):1757--1767, 1999.

\bibitem{GV05}
V.~Guruswami and A.~Vardy.
\newblock Maximum-likelihood decoding of {R}eed-{S}olomon codes is {NP}-hard.
\newblock In {\em Proceeding of SODA}, 2005.

\bibitem{lw08}
J.~Li and D.~Wan.
\newblock On the subset sum problem over finite fields.
\newblock {\em Finite Fields and Their Applications}, 14:911--929, 2008.

\bibitem{LW10}
J.~Li and D.~Wan.
\newblock A new sieve for distinct coordinate counting.
\newblock {\em Science China Mathematics}, 53(9):2351--2362, 2010.

\bibitem{LJW08}
Y.~Li and D.~Wan.
\newblock On error distance of {R}eed-{S}olomon codes.
\newblock {\em Science in China Series A: Mathematics}, 51:1982--1988, 2008.

\bibitem{Liao11}
Q.~Liao.
\newblock On {R}eed-{S}olomon codes.
\newblock {\em Chinese Annals of Mathematics, Series B}, 32B:89--98, 2011.

\bibitem{RL89}
R.~M. Roth and A.~Lempel.
\newblock On {MDS} codes via cauchy matrices.
\newblock {\em IEEE Transactions on Information Theory}, 35:1314--1319, 1989.

\bibitem{Segre55}
B.~Segre.
\newblock Curve razionali normali e k-archi negli spazi finiti.
\newblock {\em Annali di Matematica Pura ed Applicata}, 39(1):357--379, 1955.

\bibitem{SH94}
J.~A. Dias~Da Silva and Y.~O. Hamidoune.
\newblock Cyclic spaces for grassmann derivatives and additive theory.
\newblock {\em Bulletin of the London Mathematical Society}, 26:140--146, 1994.

\bibitem{Sudan97}
M.~Sudan.
\newblock Decoding of {R}eed-{S}olomon codes beyond the error-correction bound.
\newblock {\em Journal of Complexity}, 13:180--193, 1997.

\bibitem{Voloch90}
J.~F. Voloch.
\newblock Arcs in projective plans over prime fields.
\newblock {\em Journal of Geometry}, 38:198--200, 1990.

\bibitem{WHarx12}
R.~Wu and S.~Hong.
\newblock On deep holes of generalized {R}eed-{S}olomon codes.
\newblock 2012.
\newblock arXiv:1205.7016.

\bibitem{WH12}
R.~Wu and S.~Hong.
\newblock On deep holes of standard {R}eed-{S}olomon codes.
\newblock {\em Science China Mathematics}, 55(12):2447--2455, 2012.

\bibitem{ZW12}
G.~Zhu and D.~Wan.
\newblock Computing error distance of {R}eed-{S}olomon codes.
\newblock In {\em TAMC2012}, LNCS 7287, pages 214--224.

\end{thebibliography}
\appendix
\section{Proof of Proposition \ref{prop}}
\begin{proof}
$\Rightarrow$ Suppose $u$ is a deep hole of $C=[n,k]_q$, we need to show that $G'$ is a generator matrix for another MDS code.
Equivalently, we need to show that any $k+1$ columns of $G'$ are linearly independent.

Assume there exist $k+1$ columns of $G'$ which are linearly dependent.  Without loss of generality, we assume that the first $k+1$ columns of $G'$ are linear dependent. Consider the submatrix consisting of the intersection of the first $k+1$ rows and the first $k+1$ columns of $G'$. Hence there exist $a_1,\ldots,a_k\in\F_q$, not all zero, such that
$$(u_1,\ldots,u_{k+1})=a_1r_{1,k+1}+\cdots+a_kr_{k,k+1},$$
where $r_{i,k+1}$ is the vector consisting of the first $k+1$ elements of the $i$-th row of $G$ for $1\leqslant i \leqslant k$. Let $v=a_1r_1+\cdots+a_kr_k\in C$, where $r_i$ is the $i$-th row of $G$ for $1\leqslant i \leqslant k$. We have
$$d(u,v)\leqslant n-(k+1)<\rho,$$
which is a contradiction with the assumption that $u$ is a deep hole of $C$.

$\Leftarrow$ Now suppose $G'$ is a generator matrix for an MDS code, i.e., any $k+1$ columns of $G'$ are linearly independent.
We need to show that $d(u,C)=n-k$.

Assume that $d(u,C)<n-k.$
Equivalently,
there exist $a_1,\ldots,a_k\in\F_q$ such that $u$ and $v=a_1r_1+\cdots+a_kr_k$ have more than $k$ common coordinates, where $r_i$ is the $i$-th row of $G$ for $1\leqslant i \leqslant k$. Without loss of generality, we assume that the first $k+1$ coordinates of $u$ and $v$ are the same. Consider the submatrix consisting of the first $k+1$ columns. Since the rank of the matrix is less than $k+1$,
thus the first $k+1$ columns of $G'$ are linearly dependent, which contradicts the assumption.
\end{proof}





\end{document}